\newcommand{\Sh}{\mathit{Sh}}
\newcommand{\mc}{\mathit{mc}}
\newcommand{\CS}{\mathit{CS}}
\newcommand{\trim}{\mathit{trim}}
\newcommand{\Trimmed}{\mathit{Trimmed}}
\newcommand{\SameTrim}{\mathit{SameTrim}}
\newcommand{\ch}{\mathit{chi}}
\newcommand{\pa}{\mathit{par}}
\newcommand{\des}{\mathit{des}}
\newcommand{\anc}{\mathit{anc}}
\newcommand{\adj}{\mathit{adj}}
\newcommand{\Level}{\mathit{Level}}
\newcommand{\depth}{\mathit{depth}}
\newcommand{\height}{\mathit{height}}
\newcommand{\eqComment}[1]{\textnormal{\fontsize{7}{7}\selectfont{\ \ \ (#1)}}}
\newcommand{\victor}[1]{\textbf{[[Victor: #1]]}}
\begin{document}

\markboth{T. Rahwan et al.}{Towards a Fair Allocation of Rewards in Multi-Level Marketing}

\title{Towards a Fair Allocation of Rewards in Multi-Level Marketing}

\author{TALAL RAHWAN
\affil{Masdar Institute}
VICTOR NARODITSKIY
\affil{University of Southampton}
TOMASZ MICHALAK
\affil{University of Oxford}
MICHAEL WOOLDRIDGE
\affil{University of Oxford}
NICHOLAS R. JENNINGS
\affil{University of Southampton}
}

\begin{abstract} 
  An increasing number of businesses and organisations rely on
  existing users for finding new users or spreading a message.  One of
  the widely used ``refer-a-friend" mechanisms offers an equal reward
  to both the referrer and the invitee. This mechanism provides
  incentives for direct referrals and is fair to the invitee. On the
  other hand, multi-level marketing and recent social mobilisation
  experiments focus on mechanisms that incentivise both direct and
  indirect referrals. Such mechanisms share the reward for inviting a
  new member among the ancestors, usually in geometrically decreasing
  shares. A new member receives nothing at the time of joining. We
  study fairness in multi-level marketing mechanisms. We show how
  characteristic function games can be used to model referral
  marketing, show how the canonical fairness concept of the Shapley
  value can be applied to this setting, and establish the complexity of
  finding the Shapley value in each class, and provide a comparison of
  the Shapley value-based mechanism to existing referral mechanisms.
\end{abstract}
            
%Categories and Subject Descriptors: (for details on the ACM Computing Classification Scheme, see: http://www.acm.org/about/class/1998/

    \category{I.2.11}{Distributed Artificial Intelligence}{Multiagent Systems}
    \category{J.4}{Social and Behavioral Sciences}{Economics}
            
%General Terms: This section is limited to the following 16 terms: Algorithms, Management, Measurement, Documentation, Performance, Design, Economics, Reliability, Experimentation, Security, Human Factors, Standardization, Languages, Theory, Legal Aspects, Verification.

    \terms{Economics, Theory} 

%Keywords: This section is your choice of terms you would like to be indexed by. Please refer to the Call for Paper page for a list of topics.

    \keywords{Cooperative Game Theory, Electronic Markets, Incentives for Cooperation}

%\acmformat{Talal Rahwan, Victor Naroditskiy, Tomasz Michalak, Michael Wooldridge and Nicholas R. Jennings, 2014. Towards a Fair Allocation of Rewards in Multi-Level Marketing.}
% At a minimum you need to supply the author names, year and a title.
% IMPORTANT:
% Full first names whenever they are known, surname last, followed by a period.
% In the case of two authors, 'and' is placed between them.
% In the case of three or more authors, the serial comma is used, that is, all author names
% except the last one but including the penultimate author's name are followed by a comma,
% and then 'and' is placed before the final author's name.
% If only first and middle initials are known, then each initial
% is followed by a period and they are separated by a space.
% The remaining information (journal title, volume, article number, date, etc.) is 'auto-generated'.
            
%\begin{bottomstuff} 
%Authors' addresses: \texttt{trahwan@gmail.com}, \texttt{vn@ecs.soton.ac.uk}, \texttt{tomasz.michalak@cs.ox.ac.uk}, \texttt{mjw@cs.ox.ac.uk}, \texttt{nrj@ecs.soton.ac.uk}
%\end{bottomstuff}

\maketitle

%%%%%%%%%%%%%%%%%%%%%%%%%%%%%%%%%%%%%%%%%%%%%%%%%%%%%%%%%%%%%%%%%%%%%%%%%%%%%%%

\section{Introduction}\label{sec:introduction}

%%%%%%%%%%%%%%%%%%%%%%%%%%%%%%%%%%%%%%%%%%%%%%%%%%%%%%%%%%%%%%%%%%%%%%%%%%%%%%%

%- referrals are timely and crucial
%- examples from real life of the power of word-of-mouth referrals
%- academic work on referrals (multi-level marketing, sybil proofness, balloon)
%- social experiments showing how powel social media is: the balloon challenge, the tag challenge
% -
%---------------------------------------------------------------------
%- We address the following question: what is the fairest mechanism?
%- While previous work has advantages, it doesn’t address the issue of fairness
%- Our contributions
%    - We formalize a new class called “tree games”
%    - We show that the referral problem is a special tree game
%    - We provide a closed form formula for Shapley, computable in linear time
%    - We compare the shares of Shapley with those of previous algorithms

Social networks and email made it easy to share information. These tools provide powerful ways to spread a message without using mainstream media. Compelling examples of their use can be found in politics (e.g., twitter revolutions), humanitarian relief (e.g., help-maps such as Ushahidi in post-earthquake Haiti), businesses (e.g., promoting products through Facebook), and entertainment (e.g., music videos such as Gangnam Style). Many famous cases of messages going viral are organic:  users share because they like the message, or identify with it.\footnote{For example, an absurd and funny Old Spice commercial collected nearly 50 million views \url{http://www.youtube.com/watch?v=owGykVbfgUE}%, while a clip by a country singer whose guitar was broken by airline staff collected 13 million views  \url{http://www.youtube.com/watch?v=5YGc4zOqoz}
} However, it is hard to make a message viral by design, and in fact very few messages achieve viral stardom. A more realistic goal is a moderate level of social reach achieved be providing incentives for sharing information. 

%Design of such incentives for referrals is a topic that has been gaining attention in the last few years. 

The marketing literature acknowledges the importance of referral incentives~\cite{buttle98} but provides limited guidance on how to design them (see, e.g.,~\cite{biyalogorsky01}).  %Furthermore, marketing literature focuses on scenarios where customers make a purchase before they can refer friends. We will be interested in a qualitatively different domain where no product sale is associated with a referral. \victor{should we talk about groupon, which does have a product sale.} 
Multi-level marketing mechanisms---mechanisms that offer not only direct but also indirect referrals---have recently received some attention in computer science.  Some of the interest has been fuelled by a  social mobilisation experiment known as the Red Balloon Challenge~\cite{darpa:net:report}, which provided validation for using multi-level mechanisms for referral incentivisation in a non-commercial context. The task posed by the challenge was to locate ten red weather balloons moored at undisclosed locations throughout the US. The winning incentive mechanism~\cite{science} shared the reward associated with each balloon in a recursive manner: the referrer received half of the finder's prize, the referrer's referrer received a quarter, etc. In the context of multi-level marketing this is an example of a  {\em geometric} reward mechanism. 

An axiomatic justification of geometric mechanisms appears
in~\cite{emek11}. An important property for such mechanisms is
\emph{Sybil-proofness}: i.e., the users should have no incentive to
create fake identities of themselves. Sybil-proof mechanisms have been
studied in a number of papers
including~\cite{babaioff12,drucker12,chen13,lv13}. Work on referral
incentives has also been carried out in the model of Query Incentive
Networks (QIN)~\cite{kleinberg05}, where the root of a tree needs to
incentivise the nodes to propagate the query until a node holding the
answer is reached. Performance of geometric mechanisms in QINs was
analyzed in~\cite{split}.

A number of desirable properties that a referral incentive mechanism
should satisfy are described in~\cite{douceur07}. One of the
properties relevant to fairness is ``value proportional to
contribution". This property ensures that a user recovers a fraction
of the effort he has put in. The effort, however, refers to performing
non-recruitment activities such as sharing files on a p2p network or
classifying galaxies on GalaxyZoo.

Unlike the above-cited work, our interest is in fairness when applied
to referrals as opposed to other types of contributions. In this work,
the contribution of a node is measured by the \emph{descendents} he
brings.  We formalise a class of cooperative games that enables us to
model such contributions. This class of games, which we call {\em tree
  games}\footnote{Not to be confused with {\em game trees}.} allows us
to formally study fairness in referral mechanisms by applying the
best-known game-theoretic fairness model---the Shapley
value~\cite{Shapley:53}.

The work is motivated not only by an academic interest in applying a
standard fairness concept to a new and popular domain, but also by the
fact that fairness of referral mechanisms is important in practice. In
particular, many popular mechanisms compensate not only the referrer
but also the invitee with an equal reward. Examples of such fair
``refer-a-friend" mechanisms include Dropbox and GiffGaff where both
the referrer and the invitee receive 500Mb of extra space or \pounds
5, respectively, for each successful referral of a new user.  As we
will show, compensation based on the Shapley value combines features
of both fair refer-a-friend mechanisms and geometric
mechanisms. Specifically, the Shapley value of an invitee is non-zero,
and all of the ancestors are compensated.

The following example illustrates our work.

\begin{example}
\label{ex:one}
The Dropbox referral program offers 500MB of extra free space to the
referrer and the invitee. Since a fixed amount is awarded for each
referral, we can say that Dropbox values each referral at 1GB per
user. Consider the referral tree on the right of
Figure~\ref{fig:graphAndInducedSubgraph}: $1$ invites $3$ who invites
$6$ and $7$.
\end{example}
There are 3 referrals, so Dropbox will distribute 3GB of free space.
Under the Dropbox's refer-a-friend scheme, user $1$ receives 500MB
when he invites $3$. User $3$ receives 500MB when accepting the
referral from $1$. Inviting $6$ and $7$ gives $3$ an extra 500MB more
for each invite. Users $6$ and $7$ receive 500MB at the time of
joining.  Distributing 3GB according to the geometric mechanism that
passes $\frac{1}{2}$ of referral reward to the referrer, results in
user $1$ receiving $\frac{1}{2}$ for $3$ and $\frac{1}{4}$ for each of
$6$ and $7$. User $3$ receives $\frac{1}{2}$ for each of $6$ and
$7$. The total shares for splitting 3GB are $(1,1,0,0)$ giving 1.5GB
to nodes $1$ and $3$ and nothing to the leaf nodes $6$ and $7$. The
resulting compensations are shown in Table~\ref{tbl:example}.

The current Dropbox mechanism is fair to the invitee: the invitee gets
the same reward as the referrer. However it is not fair to nodes who
bring many descendants, as a node is compensated only for direct
referrals but not for the nodes his children bring. On the other hand,
the geometric mechanism acknowledges indirect referrals (node 1 is
compensated for nodes 6 and 7) but does not provide any compensation
to the invitees at the time of joining. This can be viewed as unfair
by the invited node. Our interest is in defining a mechanism that is
fair to everyone. Every node joining through a referral brings value
to Dropbox, and each should acknowledged. However, nodes whose
referral activity brings more new customers should receive more. The
mechanism that we derive using the game-theoretic concept of Shapley
value satisfied both of these properties.

The derivation of the Shapley value for a general class of referral
marketing games follows in Section~\ref{sec:treeGames}, but for the
simple example given here, the Shapley-value mechanism has a simple
form: {\em the value of a referral is distributed in equal shares
  among the invitee and all of his ancestors}. Thus, when node $6$
joins, nodes $1$, $3$, and $6$ receive 333MB each. The final
distribution of the reward appears in Table~\ref{tbl:example}.

%The Dropbox mechanism gives the highest reward to the user who made the most direct referrals: $3$ gets three times as much as any of the other users. Users $6$ and $7$ who did not invite anyone receive 500MB each. In contrast, the geometric mechanism rewards both direct an indirect referrals but does not give anything to non-referring nodes. 
%The Shapley value provides a middle ground: it acknowledges indirect referrals and also compensates non-referring nodes. The Shapley rewards are fair in the sense that  extra value is assigned to $1$ relative to $3$ because without the referral of $1$, agent $3$ would have never joined. 

\begin{table}
    \centering
    {
        \tbl{Referrers' compensation for the tree on the right of Figure~\ref{fig:graphAndInducedSubgraph}.\label{tbl:example}}
        {
            \begin{tabular}{|l||c|c|c|c|}
                \hline
                & $1$ & $3$ & $6$ & $7$\\
                \hline
                Dropbox & $500$  &$1,500$ & $500$ & $500$ \\
                Geometric & $1,500$   & $1,500$ & $0$ & $0$\\
                Shapley & $1,167$ & $1,167$ & $333$ & $333$ \\
%                \ourmech & $1,167$ & $1,167$ & $333$ & $333$ \\
                \hline
            \end{tabular}
        }
    }
\end{table}

The example illustrates how referral marketing can be modelled by
referral trees. Specifically, there is one special node, the {\em
  root}, that initiates referral activities. This node can be viewed
as a customer who joined Dropbox independently. The root node can
invite its friends, and each successful referral has a value
associated with it that can be shared among the joining
nodes. However, if an invitation is not made, that is if an edge is
removed, then the entire bottom subtree does not join. This modelling
choice means that there is a single chance for each node to get
invited. This assumption is consistent with most models in the
referral literature
\cite{babaioff12,split,chen13,douceur07,drucker12,emek11,kleinberg05,lv13}. This
corresponds to modelling the referral process as trees where a node
can be referred by only one other node, as opposed to graphs allowing
multiple referrals. Considering graphs in general is an interesting
open question, but given that they occur naturally in many settings,
trees seem an obvious place to start an analysis.

A model related to ours was studied in \cite{myerson77}. \citeauthor{myerson77} defined games on graphs where the graph encodes which of the players can cooperate. The value of a coalition is then defined as the sum of the values of connected components in it. For the graph model, \citeauthor{myerson77} proved that there is one value that meets certain natural fairness axioms, and this value equals the Shapley value. We take the graph games to the referral marketing domain. Our model is a special case where the graph is a tree and the value of any connected component not containing the root is zero.  

%All of the results derived in our paper are novel as Myerson's work neither made a connection to referral marketing, nor derived Shapley value computation exploiting the tree game structure.\footnote{Another model relevant to tree games was   studied in~\cite{demange04}. \citeauthor{demange04} also considered a tree structure for connecting the players. However, his coalitions can be any connected subgraphs, not components containing the root as in our work. Furthermore,\citeauthor{demange04} studies stability and not fairness, which is the focus of our work.}

Interestingly, since our model is a special case of \cite{myerson77}, the fairness
axioms defined there carry over to our marketing domain. In
particular, this means that any two players on both sides of the edge
profit from the edge equally. This property provides justification for
using the Shapley value in the referral marketing
context. ``Refer-a-friend" mechanisms such as Dropbox equally
compensate the referral and the invitee. The Shapley value on referral
trees generalises this fairness property to indirect referrals. Thus,
the Shapley value compensates \emph{all} types of contributions---it
is not only the direct referrals that matter but also the indirect
ones---and it provides a fair way of compensating the referrer and the
invitee in equal shares.

Our main contribution is a fair mechanism for referral
marketing. To this end, we introduce \textit{tree games} that allow us
to model rewards in multi-level marketing. We present a simple
technique for computing the Shapley value in the subclass where each
referral has a fixed value, and a simplified computation for the
general tree games. In the example above, each referral carried the same value of 1GB. Our
work extends to cases where the contribution of a referral depends on
the total number of nodes that join or, in the most general case on
the identities of all of the nodes that join.

The remainder of this paper is organised as
follows. Section~\ref{sec:background} is intended to familiarise the
reader with the necessary concepts from cooperative game theory. In
Section~\ref{sec:treeGames}, we introduce a class of games defined
over trees, and show how the Shapley value can be computed for these
games, thus providing a fair compensation for referral
activities. Section~\ref{sec:basicTreeGames} focuses on a natural
subclass of games where we show that the Shapley value is computable
in linear time. Section~\ref{sec:conclusions} concludes the paper and
presents some potential future
extensions. Appendix~\ref{sec:tableOfNotation} provides a summary of
the main notation used throughout the paper.

%%%%%%%%%%%%%%%%%%%%%%%%%%%%%%%%%%%%%%%%%%%%%%%%%%%%%%%%%%%%%%%%%%%%%%%%%%%%%%%

\section{Game Theoretic Background}\label{sec:background}

%%%%%%%%%%%%%%%%%%%%%%%%%%%%%%%%%%%%%%%%%%%%%%%%%%%%%%%%%%%%%%%%%%%%%%%%%%%%%%%

\noindent In this section, we will introduce some necessary definitions and concepts from cooperative game theory.

A \textbf{characteristic function game} is a pair $(A,v)$, where $A=\{1,\dots, n\}$ is a set of players (or \textit{agents}), and $v: 2^A\to{\mathbb R}$ is a \emph{characteristic function} that maps each subset (or \emph{coalition}) of agents, $C\subseteq A$, to a real number, $v(C)$, which represents the payoff of $C$. This number is called the \emph{value} of coalition $C$.

%Throughout this paper, we will restrict our attention to characteristic function games.

A \textbf{coalition structure}, $\CS\subseteq 2^A$, is a partition of $A$ into disjoint and exhaustive coalitions.
%In other words, it is a collection of coalitions, $\CS=\{C_1, \dots, C_{|\CS|}\}$, that satisfies the conditions: $\bigcup_{j=1}^{|\CS|} C_j = A$, and $\forall i,j \in \{1, \cdots, |\CS|\}: i \neq j, C_{i}\cap C_{j}=\emptyset$.
The set of possible coalition structures is denoted as $\mathcal{CS}^A$.

An \textbf{outcome} of a game is a pair, $(\CS,\mathbf{x})$, where $\CS\in\mathcal{CS}^A$ is a coalition structure, and $\mathbf{x}=(x_1, \dots, x_n)$ is a \textit{payoff vector}, which specifies a payoff, $x_i$, for every agent $i\in A$, such that: $x_i \geq 0$ for all $i=1,\dots,m$, and $\sum_{i\in C} x_i = v(C)$ for all $C\in \CS$. Intuitively, an outcome specifies which coalitions to form, and how the payoff of each coalition is divided among its members.

A \textbf{solution concept} specifies the set of outcomes that meet certain criteria. In this context, one desirable criterion is \textbf{stability}; an outcome, $(\CS,\mathbf{x})$, is said to be \textit{stable} if no group of agents can receive a payoff greater than what was allocated to them in that outcome, i.e., if $\sum_{i\in C}x_i \geq v(C), \forall C\subseteq A$. The set of all stable outcomes in a game is called \textbf{the core} of that game \cite{gillies}. In general, the problem of determining whether there exists a stable outcome is NP-complete \cite{Conitzer:Sandholm:03}.

\textbf{Fairness} is another desirable criterion when dealing with outcomes; it is typically evaluated based on the degree to which every agent's payoff reflects its contribution. In this vein, the best-known solution concept is the \textit{Shapley value}~\cite{Shapley:53}. To formally present this concept, we need to first introduce the notion of \textit{marginal contribution}.

\begin{definition}\label{def:marginalContribution}
\textbf{[Marginal Contribution]} Given a characteristic function game, $(A,v)$, the \emph{marginal contribution} of an agent $i\in A$ to a coalition $C\subseteq A\setminus\{i\}$ is denoted by $\mc(i,C)$, and defined as the difference in value that is caused when $i$ joins $C$. Formally:
\begin{equation}\label{eqn:marginalContribution}
\mc(i,C)=v(C\cup\{i\}) - v(C).
\end{equation}
\end{definition}
Finally, let $\Pi^A$ denote the set of all permutations of $A$. For any arbitrary permutation, $\pi\in\Pi^A$, let $\pi(i)$ be the location of $i$ in $\pi$, and let $C^{\pi}_i$ be the coalition consisting of the agents that precede $i$ in $\pi$. Formally,
$$
C^{\pi}_i=\{j\in A : \pi(j)<\pi(i)\}.
$$
\noindent Now, we are ready to define the Shapley value---a solution concept that specifies how to divide the value of the \textit{grand coalition}, i.e., the coalition containing all agents. According to the Shapley value, the payoff of an agent, $i\in A$, equals its average marginal contribution to the agents that precede it in an arbitrary permutation. This payoff is referred to as \textit{the Shapley value of $i$}. Formally:

\begin{definition}\label{def:ShapleyValue}
\textbf{[Shapley Value]}
Given a characteristic function game $(A,v)$, the \emph{Shapley value of an agent} $i\in A$ is denoted by $\Sh_i$ and is given by
\begin{equation}\label{eqn:shapleyValue1}
\Sh_i=\frac{1}{n!}\sum_{\pi\in\Pi^A}\mc(i,C^{\pi}_i).
\end{equation}
The outcome $(\{A\},(\Sh_1,\dots,\Sh_n))$ is called the Shapley value of $(A,v)$.
\end{definition} 

%An alternative equivalent formula for the Shapley value of agent $i$ is:
%
%\begin{equation}\label{eqn:shapleyValue2}
%\Sh_i = \sum_{C \subseteq A\setminus \{i\}} \frac{|C|!\ (|A|-|C|-1)!} {|A|!} \Big( v(C\cup\{i\}) - v(C) \Big).
%\end{equation}

Due to its various desirable properties \cite{Shapley:53}, the Shapley value is arguably the fairest solution concept known to date in cooperative game theory .

%The importance of the Shapley value stems from the fact that it is the \textit{only} division scheme that meets all of the following fairness criteria:
%\begin{enumerate}
%\item[(1)] \textbf{Efficiency}: All the profit earned by the agents in $A$ is distributed among them. Formally: $\sum_{i\in A}\Sh_i = v(A)$;
%
%\item[(2)] \textbf{Dummy agent}: The Shapley value does not allocate any payoffs to \emph{dummy} agents (i.e., agents who do not contribute to any coalition). Formally, if $v(C)=v(C\cup\{i\})$ for every $C\subseteq A$, then $\Sh_i=0$;
%
%\item[(3)] \textbf{Symmetry}: Two agents $i$ and $j$ are said to be \textit{symmetric} if $v(C\cup\{i\}) = v(C\cup\{j\})$ for all $C\subseteq A\setminus\{i, j\}$. If two agents are symmetric, then they receive the same payoff, i.e., $\Sh_i=\Sh_j$;
%
%\item[(4)] \textbf{Additivity}: Consider a set of agents $A$ involved in two coalitional games, $(A,v')$ and $(A,v'')$. The {\em sum} of those two games is a new game, $(A,v^+)$, where $v^+(C) = v'(C)+v''(C)$ for all $C\subseteq A$. Importantly, the Shapley value of agent $i$ in $(A,v^+)$ is the sum of its Shapley values in $(A,v')$ and $(A,v'')$. This property implies the linearity of the Shapley value; if the characteristic function is multiplied by some constant, then the Shapley values will simply be scaled by that constant.
%\end{enumerate}

%tomasz here
%The last definition to be introduced in this section is that of \textbf{convex games}
A special class of coalitional games are \textbf{convex games} \cite{Shapley:71}. A characteristic function game $(A,v)$ is said to be convex if $v$ is \textit{supermodular}, i.e., if
$$
v(C\cup C') + v(C\cap C') \geq v(C)+v(C'),\ \ \forall C, C' \subseteq A.
$$
\noindent Some of the desirable properties of convex games include the fact that the Shapley value is always in the core, which implies that the core is always non-empty \cite{Shapley:71}.

\begin{comment}
The last definition to be introduced in this section is that of \textbf{reduced game}~\cite{Nowak:1997}. Let $(A,v)$ be a coalitional game and let $S$ be a non-empty subset of $A$.  The reduced game with respect to $S$ is a game $((A\setminus S)\cup [S],v_{[S]})$, where $[S]$ indicates a single hypothetical player (which is the union, or representative, of the players in $S$) and the characteristic function $v_{[S]}$ is defined for all $C \subseteq A\setminus S$ as follows:\\
\begin{align}
&v_{[S]}(C)=v(C) \nonumber \\
&v_{[S]}(C\cup \{[S]\})=v(C\cup S)\nonumber.
\end{align}
Having introduced the relevant concepts from cooperative game theory, in the following section we will introduce and analyze a class of games defined over trees.
%In the following section, we will show how to compute the Shapley value of a class of games that resembles our multi-level marketing setting.
\end{comment}
%%%%%%%%%%%%%%%%%%%%%%%%%%%%%%%%%%%%%%%%%%%%%%%%%%%%%%%%%%%%%%%%%%%%%%%%%%%%%%%

\section{Tree Games}\label{sec:treeGames}

%%%%%%%%%%%%%%%%%%%%%%%%%%%%%%%%%%%%%%%%%%%%%%%%%%%%%%%%%%%%%%%%%%%%%%%%%%%%%%%

\noindent This section is divided into three subsections. The first introduces the main notation that will be used for graphs and trees. Section~\ref{sec:generalTreeGames} introduces a class of characteristic function games, which we call \textit{tree games},
%
%while sections \ref{sec:sizeBasedTreeGames} and \ref{sec:basicTreeGames} focus on two intuitive subclasses, namely \textit{size-based tree games}, and \textit{basic tree games}, respectively.
%
while Section~\ref{sec:basicTreeGames} focuses on an intuitive subclass, namely \textit{basic tree games}.

%%%%%%%%%%%%%%%%%%%%%%%%%%%%%%%%%%%%%%%%%%%%%%%%%%%%%%%%%%%%%%%%%%%%%%%%%%%%%%%

\subsection{Main Notation for Graphs and Trees}\label{sec:graphNotation}

%%%%%%%%%%%%%%%%%%%%%%%%%%%%%%%%%%%%%%%%%%%%%%%%%%%%%%%%%%%%%%%%%%%%%%%%%%%%%%%

Throughout this paper, a graph will be denoted by $G$, with $N(G)$ and $E(G)$ being the set of nodes and the set of edges in $G$, respectively. In all the graphs that will be considered in this paper, each node will represent a unique agent. As such, a node will be denoted by a number $i\in\{1,\dots,n\}$ (just as we denote an agent), and a subset of nodes will be denoted by $C$ (just as we denote a coalition of agents). For every $C\subseteq N(G)$, we will write $\adj(C,G)$ to denote the nodes in $N(G)\setminus C$ that are each adjacent to one or more of the nodes in $C$. More formally,
\begin{equation}\label{eqn:adj}
\adj(C,G)=\{i\in N(G)\setminus C : \exists j\in C, (i,j)\in E(G)\}.
\end{equation}
Furthermore, we will write $G^C$ to denote the subgraph of $G$ that is induced by $C$. In other words, we have:
\begin{equation}\label{eqn:nodesOfSub}
N(G^C) = C,
\end{equation}
\begin{equation}\label{eqn:edgesOfSub}
E(G^C) = \{(i,j)\in E(G): i,j\in C\}.
\end{equation}
Next, we introduce our notation for trees. We use the notation $T$ instead of $G$ if the graph happens to be a rooted tree. Since every tree considered in this paper is rooted, we simply write ``tree'' instead of ``rooted tree'', and write $r(T)$ to denote the root of $T$. For every node $i\in N(T)$, let $\pa(i,T)$ be the parent of $i$ in $T$, and let $\ch(i,T)$, $\des(i,T)$ and $\anc(i,T)$ be the set of \textit{children}, \textit{descendants}, and \textit{ancestors} of $i$ in $T$, respectively.\footnote{\footnotesize Observe that $\anc(r(T),T)=\emptyset.$} 

Just as we denote by $G^C$ the subgraph of $G$ induced by $C$, we denote by $T^C$ the subgraph of $T$ induced by $C$. Furthermore, for every $i\in N(T)$, we write $T_i$ to denote the subset of $T$ rooted at $i$. More formally, $T_i = T^{\{i\}\cup\des(i,T)}$. We denote by $\depth(i,T)$ the depth\footnote{\footnotesize Recall that the depth of a node in a tree is the number of edges in the path between that node and the root of the tree. As such, the depth of the root itself is zero.} of $i$ in $T$, and by $\Level_j(T)$ the set of nodes in $N(T)$ whose depth is $j$. The height of $T$ is denoted by $\height(T)$. Formally, $\height(T)=\max_{i\in N(T)}\depth(i,T)$. 

Now, let us introduce the concept of \textit{``trimming''}. Basically, for every $C\subseteq N(T)$, by trimming $T^C$ we obtain another subgraph $T^C_{\trim}$, which is the connected component in $T^C$ that contains the root of $T$; if no such component exists then $T^C_{\trim}$ is the null graph. An example is illustrated in Figure~\ref{fig:graphAndInducedSubgraph}. More formally:

\begin{equation}\label{eqn:nodesOfSub_trim}
N(T^C_{\trim}) = \{i \in C: \anc(i,T)\subseteq C\},
\end{equation}
\begin{equation}\label{eqn:edgesOfSub_trim}
E(T^C_{\trim}) = \{(i,j)\in E(T): i,j\in N(T^C_{\trim}))\}.
\end{equation}	 

%---------------------------------------
\begin{figure}[t!]
\centerline{\includegraphics{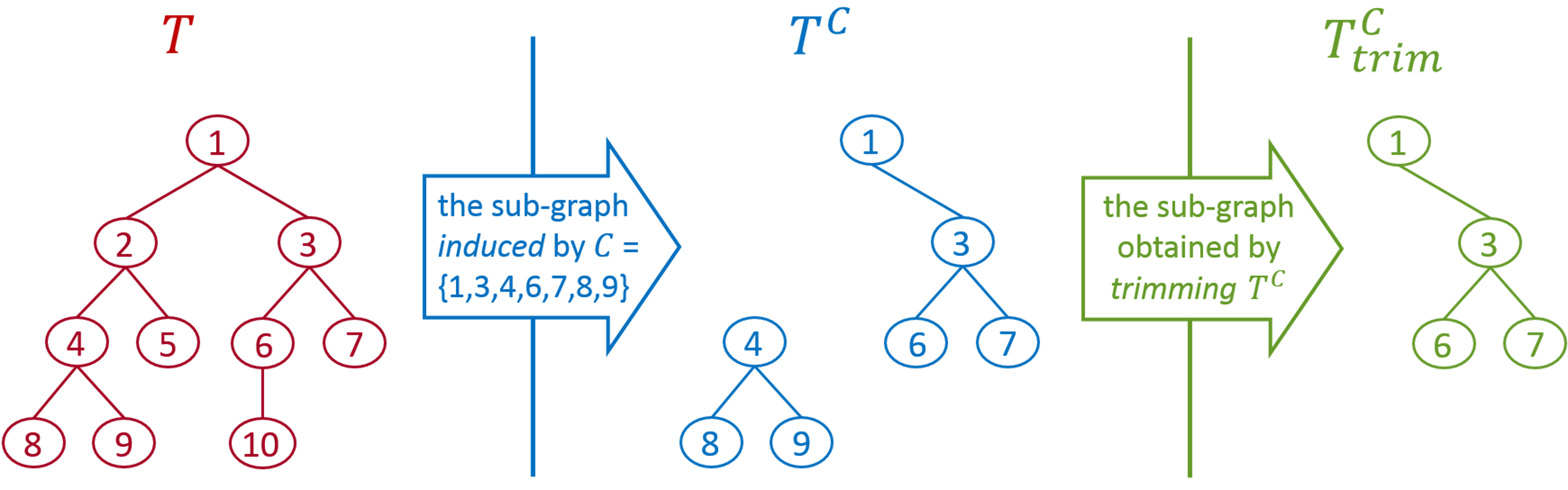}}
\caption{An example of $T$, $T^C$ and $T^C_{\trim}$, where $C=\{1,3,4,6,7,8,9\}$.}
\label{fig:graphAndInducedSubgraph}
\end{figure}
%---------------------------------------

%Observe that $T^C_{\trim}$ is always a tree, while $T^C$ is a forest.

For every $C\subseteq N(T)$, the subgraph $T^C$ is said to be ``trimmed'' if and only if $T^C = T^C_{\trim}$. In Figure~\ref{fig:graphAndInducedSubgraph}, for example, the subgraphs $T^{\{1,3,6,7\}}$ and $T^{\{1,2,5\}}$ are trimmed, while the subgraph $T^{\{1,2,8,9\}}$ is not. We will denote by $\Trimmed(T)$ the set of coalitions that induce trimmed subgraphs. More formally:
\begin{equation}\label{eqn:trimmed}
\Trimmed(T) = \left\{C\subseteq N(T): T^C = T^C_{\trim}\right\}.
\end{equation}

Finally, for every $C\subseteq N(T)$, we will denote by $\SameTrim(C,T)$ the set consisting of every coalition $C'\subseteq N(C)$ such that by trimming $T^{C'}$ we obtain the same subgraph as the one obtained by trimming $T^C$. More formally:

\begin{equation}\label{eqn:sameTrim}
\SameTrim(C,T) = \{C'\subseteq N(T): T^{C'}_{\trim} = T^C_{\trim} \}.
\end{equation}

An example of $\Trimmed(T)$ and $\SameTrim(C)$ is illustrated in Figure~\ref{fig:exampleOfTrimmedAndSameTrim}. Appendix~\ref{sec:tableOfNotation} provides a summary all notation introduced thus far.

%Having introduced the main notation for graphs and trees, we are now ready to introduce tree games.

%---------------------------------------
\begin{figure}[t!]
\centerline{\includegraphics{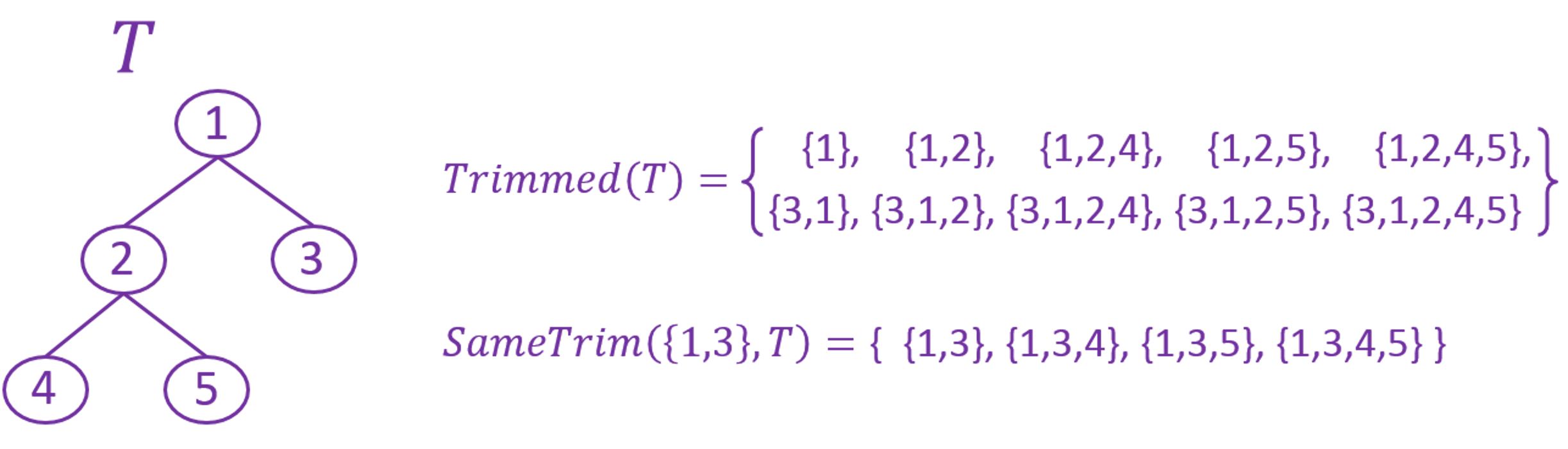}}
\caption{An example of $\Trimmed(T)$ and $\SameTrim(C)$, where $C=\{1,3\}$.}
\label{fig:exampleOfTrimmedAndSameTrim}
\end{figure}
%---------------------------------------

%%%%%%%%%%%%%%%%%%%%%%%%%%%%%%%%%%%%%%%%%%%%%%%%%%%%%%%%%%%%%%%%%%%%%%%%%%%%%%%

\subsection{Tree Games}\label{sec:generalTreeGames}

%%%%%%%%%%%%%%%%%%%%%%%%%%%%%%%%%%%%%%%%%%%%%%%%%%%%%%%%%%%%%%%%%%%%%%%%%%%%%%%

We now show how trees induce a class of characteristic function games, which we call \textit{tree games}, and analyze some properties of these games.

\begin{definition}\label{def:treeGames}
\textbf{[Tree Game]} A \emph{tree game} is a pair $(T,f)$, where $T$ is a tree, and $f: \Trimmed(T)\to{\mathbb R}$. The set of agents in the tree game $(T,f)$ is $N(T)$, and the value of every coalition $C\subseteq N(T)$ is given by:
\begin{equation}\label{eqn:valueInTreeGames}
v(C) = f\left( N(T^C_{\trim}) \right).
\end{equation}
\end{definition}

Intuitively, a tree game $(T,f)$ is a characteristic function game, $(A,v)$, where $A=N(T)$ and the value of every $C\subseteq N(T)$ depends solely on the agents in $C$ that are connected to the root of $T$ via other members of $C$.

%Based on this, $N(T)$ and $A$ will be used interchangeably when referring to the set of agents in the tree game $(T,f)$.

We will assume throughout this paper that $f(\emptyset) = 0$. As such, the following holds since $N(T^C_{\trim}) = \emptyset$ for all $C\subseteq N(T)\setminus\{r(T)\}$:
\begin{equation}\label{eqn:valueIsZeroWithoutTheRoot}
\forall C\subseteq N(T)\setminus\{r(T)\},\ v(C)=0.
\end{equation} 

The following lemmas will prove useful in subsequent proofs.

%---------------------------------------

\begin{lemma}\label{lem:zeroContributionInTreeGames}
Let $(T,f)$ be a tree game. For every agent $i\in N(T)$, and every coalition $C\subseteq N(T):i\in C$, the following holds:
$$
i\notin N(T^C_{\trim})\ \ \ \ \Rightarrow\ \ \ \ \mc(i,C\setminus\{i\})=0.
$$
\end{lemma}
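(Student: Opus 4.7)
The plan is to show that $v(C) = v(C \setminus \{i\})$, which immediately yields $\mc(i, C\setminus\{i\}) = 0$ by Definition~\ref{def:marginalContribution}. Since $v(C) = f(N(T^C_{\trim}))$ by \eqref{eqn:valueInTreeGames}, it suffices to establish the set equality
$$
N(T^C_{\trim}) = N(T^{C\setminus\{i\}}_{\trim}).
$$
Once this is in hand, applying $f$ to both sides closes the argument in one line.

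I would carry out the set equality by double inclusion, using the characterisation in \eqref{eqn:nodesOfSub_trim}. For the inclusion $N(T^C_{\trim}) \subseteq N(T^{C\setminus\{i\}}_{\trim})$, take any $j \in N(T^C_{\trim})$; by hypothesis $j \ne i$, so $j \in C\setminus\{i\}$. The only thing to verify is that $\anc(j,T) \subseteq C\setminus\{i\}$, i.e.\ that $i \notin \anc(j,T)$. This is where the core observation lies, and I flag it as the main (though modest) obstacle: if $i$ were an ancestor of $j$, then every ancestor of $i$ would also be an ancestor of $j$ and hence lie in $C$ by the assumption $j \in N(T^C_{\trim})$; by \eqref{eqn:nodesOfSub_trim} this would force $i \in N(T^C_{\trim})$, contradicting the hypothesis of the lemma. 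Hence $i \notin \anc(j,T)$, giving $\anc(j,T) \subseteq C\setminus\{i\}$ as required. The reverse inclusion $N(T^{C\setminus\{i\}}_{\trim}) \subseteq N(T^C_{\trim})$ is essentially immediate: any $j$ in the left-hand set satisfies $j \in C\setminus\{i\} \subseteq C$ and $\anc(j,T)\subseteq C\setminus\{i\} \subseteq C$, so $j \in N(T^C_{\trim})$ by \eqref{eqn:nodesOfSub_trim}.

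Combining the two inclusions yields $N(T^C_{\trim}) = N(T^{C\setminus\{i\}}_{\trim})$; therefore
$$
\mc(i,C\setminus\{i\}) = v(C) - v(C\setminus\{i\}) = f(N(T^C_{\trim})) - f(N(T^{C\setminus\{i\}}_{\trim})) = 0,
$$
completing the proof. The entire argument is short, and the only place that actually uses the tree structure (rather than just the definition of the induced-subgraph value) is the ancestor-chain observation above: a node outside the trimmed subgraph cannot sit on the ancestor path of anything inside it, which is precisely why deleting it leaves the trimmed subgraph unchanged.
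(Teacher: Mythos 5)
Your proof is correct; the key step---showing $N(T^C_{\trim}) = N(T^{C\setminus\{i\}}_{\trim})$ via the observation that a node outside the trimmed component cannot be an ancestor of a node inside it (since ancestors of ancestors are ancestors, and $i\in C$ would then force $i\in N(T^C_{\trim})$)---is exactly the natural argument. The paper states this lemma without proof, so there is nothing to compare against; your write-up fills that gap correctly and in the way the authors evidently intended.
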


%---------------------------------------

\begin{lemma}\label{lem:symmetricContributionInTreeGames}
Let $(T,f)$ be a tree game. For every agent $i\in N(T)$, and every pair of coalitions $C_1,C_2\subseteq N(T):i\in C_1\cap C_2$, the following holds:
$$
T^{C_1}_{\trim}=T^{C_2}_{\trim}\ \ \ \ \Rightarrow\ \ \ \ \mc(i,C_1\setminus\{i\})=\mc(i,C_2\setminus\{i\}).   
$$
\end{lemma}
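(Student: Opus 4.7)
The plan is to expand the marginal contributions using Definition~\ref{def:marginalContribution} and equation~\eqref{eqn:valueInTreeGames}, obtaining
\[
\mc(i,C_k\setminus\{i\}) \;=\; f\bigl(N(T^{C_k}_{\trim})\bigr) \;-\; f\bigl(N(T^{C_k\setminus\{i\}}_{\trim})\bigr), \qquad k=1,2.
\]
The first term of each difference already agrees by the hypothesis $T^{C_1}_{\trim}=T^{C_2}_{\trim}$. So the whole lemma reduces to proving that the hypothesis also forces $T^{C_1\setminus\{i\}}_{\trim}=T^{C_2\setminus\{i\}}_{\trim}$, after which the two differences are equal term-by-term.

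To establish this structural claim I would argue that the trimmed subgraph $T^{C\setminus\{i\}}_{\trim}$ is in fact a function of $T^C_{\trim}$ and the fixed node $i\in N(T)$ alone (not of $C$ itself). Using the characterisations \eqref{eqn:nodesOfSub_trim}--\eqref{eqn:edgesOfSub_trim}, I would split into two cases. If $i\notin N(T^C_{\trim})$, then some ancestor $a$ of $i$ is missing from $C$; since $a$ is then also missing from $C\setminus\{i\}$, any node $j$ having $i$ as an ancestor automatically has $a$ as an ancestor outside the coalition, so such $j$ was already absent from $T^C_{\trim}$. A straightforward set-equality check then yields $T^{C\setminus\{i\}}_{\trim}=T^C_{\trim}$. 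If instead $i\in N(T^C_{\trim})$, I would show that $T^{C\setminus\{i\}}_{\trim}$ is obtained from $T^C_{\trim}$ by deleting $i$ together with all of its descendants that lie in $T^C_{\trim}$: those descendants now have the missing ancestor $i$ and hence drop out, whereas any node not having $i$ as an ancestor is unaffected. In both cases the result is determined entirely by $T^C_{\trim}$ and $i$.

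Applying this observation with $C=C_1$ and $C=C_2$ and feeding in the assumption $T^{C_1}_{\trim}=T^{C_2}_{\trim}$ immediately gives $T^{C_1\setminus\{i\}}_{\trim}=T^{C_2\setminus\{i\}}_{\trim}$, hence $f(N(T^{C_1\setminus\{i\}}_{\trim}))=f(N(T^{C_2\setminus\{i\}}_{\trim}))$, completing the proof.

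The only delicate step is the structural claim in the second paragraph, and specifically the case $i\notin N(T^C_{\trim})$, where one has to be careful that the witness ancestor $a\notin C$ still blocks every would-be new member of the trimmed subgraph when $i$ is removed. Everything else is pure unfolding of definitions; no induction on tree structure is required.
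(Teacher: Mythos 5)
Your proof is correct, and it follows the route the paper implicitly intends: the paper states this lemma without proof, and your key structural identity $N(T^{C\setminus\{i\}}_{\trim}) = N(T^{C}_{\trim})\setminus N(T_i)$ (covering both of your cases, since in the case $i\notin N(T^{C}_{\trim})$ the intersection $N(T^{C}_{\trim})\cap N(T_i)$ is empty) is exactly the fact the authors later assert as Equation~\eqref{eqn:proof:shapleyOfTreeGames:20} in the proof of Theorem~\ref{thm:shapleyOfTreeGames}. Your worry about ``would-be new members'' in the first case is moot, since $N(T^{C}_{\trim})$ is monotone in $C$ by \eqref{eqn:nodesOfSub_trim}, so deleting $i$ from $C$ can never add nodes to the trimmed set; otherwise the argument is a clean unfolding of the definitions.
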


%---------------------------------------

\begin{lemma}\label{lem:SameTrimSetsAreDisjoint}
Let $(T,f)$ be a tree game. For every pair of coalitions $C_1, C_2\in \Trimmed(T):C_1\neq C_2$, the following holds:
$$
\SameTrim(C_1,T)\ \ \cap\ \ \SameTrim(C_2,T) \ \ = \ \ \emptyset.
$$
\end{lemma}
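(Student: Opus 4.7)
The plan is to prove the lemma by contradiction, unpacking the definitions of $\Trimmed(T)$ and $\SameTrim(\cdot,T)$. The key observation is that if $C \in \Trimmed(T)$, then by \eqref{eqn:trimmed} we have $T^C = T^C_{\trim}$, and by \eqref{eqn:nodesOfSub} it follows that
$$
C \;=\; N(T^C) \;=\; N(T^C_{\trim}).
$$
In other words, a trimmed coalition is uniquely recoverable from its trimmed induced subgraph: the coalition is exactly the node set of that subgraph.

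The argument then runs as follows. Suppose for contradiction that there exists some $C' \in \SameTrim(C_1,T) \cap \SameTrim(C_2,T)$. By \eqref{eqn:sameTrim},
$$
T^{C'}_{\trim} \;=\; T^{C_1}_{\trim} \qquad \text{and} \qquad T^{C'}_{\trim} \;=\; T^{C_2}_{\trim},
$$
and hence $T^{C_1}_{\trim} = T^{C_2}_{\trim}$. Since $C_1, C_2 \in \Trimmed(T)$, the observation above yields $C_1 = N(T^{C_1}_{\trim}) = N(T^{C_2}_{\trim}) = C_2$, contradicting $C_1 \neq C_2$. Therefore no such $C'$ exists and the two $\SameTrim$ sets are disjoint.

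There is no real obstacle here: the whole proof is a direct manipulation of the definitions in Section~\ref{sec:graphNotation}. The only point that deserves mention is the identification $C = N(T^C_{\trim})$ for trimmed $C$, which is what makes the map $C \mapsto T^C_{\trim}$ injective on $\Trimmed(T)$ and is the engine driving the disjointness.
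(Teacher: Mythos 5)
Your proof is correct: the paper states this lemma without proof, and your argument is exactly the intended one, resting on the identity $C=N(T^C)=N(T^C_{\trim})$ for $C\in\Trimmed(T)$, which the paper itself invokes later (Equation~\eqref{eqn:proof:shapleyOfTreeGames:19} in the proof of Theorem~\ref{thm:shapleyOfTreeGames}) and which makes $C\mapsto T^C_{\trim}$ injective on $\Trimmed(T)$. Nothing is missing.
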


%---------------------------------------

\begin{lemma}\label{lem:UnionOfSameTrimSetsCoversAllCoalitions}
Let $(T,f)$ be a tree game. The following holds, where $2^{N(T)}$ is the set of possible coalitions, i.e., subsets of $N(T)$:
$$
\bigcup\limits_{C\in\Trimmed(T)} \SameTrim(C,T) \ \ = \ \ 2^{N(T)}.
$$
\end{lemma}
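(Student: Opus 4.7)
The containment $\bigcup_{C\in\Trimmed(T)}\SameTrim(C,T)\subseteq 2^{N(T)}$ is immediate from the definition \eqref{eqn:sameTrim}, so the plan is to establish the reverse containment. Given any $C'\subseteq N(T)$, I will exhibit an explicit coalition $C\in\Trimmed(T)$ such that $C'\in\SameTrim(C,T)$. The natural candidate is
\[
C \;=\; N(T^{C'}_{\trim}),
\]
i.e., the node set of the connected component of $T^{C'}$ containing the root (or $\emptyset$ if no such component exists).

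The first step is to verify that $C\in\Trimmed(T)$, i.e., $T^{C}=T^{C}_{\trim}$. By \eqref{eqn:nodesOfSub_trim}, it suffices to show $\anc(i,T)\subseteq C$ for every $i\in C$. If $i\in C=N(T^{C'}_{\trim})$, then by \eqref{eqn:nodesOfSub_trim} applied to $C'$, every ancestor $j\in\anc(i,T)$ lies in $C'$, and moreover $\anc(j,T)\subseteq\anc(i,T)\subseteq C'$, so $j$ itself satisfies the membership criterion for $N(T^{C'}_{\trim})$, giving $j\in C$. Hence the ancestor-closure property holds on $C$, so $T^C$ is already equal to $T^C_{\trim}$ and $C\in\Trimmed(T)$ by \eqref{eqn:trimmed}.

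The second step is to show $C'\in\SameTrim(C,T)$, i.e., $T^{C'}_{\trim}=T^{C}_{\trim}$. Since $C\in\Trimmed(T)$, we have $T^C_{\trim}=T^C$, which has node set $C=N(T^{C'}_{\trim})$ by construction. Comparing edge sets via \eqref{eqn:edgesOfSub_trim}, both $T^{C'}_{\trim}$ and $T^C$ contain exactly those edges of $T$ whose endpoints lie in $N(T^{C'}_{\trim})=C$, so the two subgraphs coincide. Therefore $C'\in\SameTrim(C,T)$ by \eqref{eqn:sameTrim}, completing the proof. The only mildly subtle point is the ancestor-closure argument in the first step; everything else is an unwinding of the definitions in Section~\ref{sec:graphNotation}.
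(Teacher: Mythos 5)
Your proof is correct: taking $C=N(T^{C'}_{\trim})$ and checking, via the transitivity of the ancestor relation, that this set is ancestor-closed (hence in $\Trimmed(T)$) and induces the same trimmed subgraph as $C'$ is exactly the natural argument, and it also covers the degenerate case $r(T)\notin C'$ through the empty coalition. The paper states this lemma without proof, so there is nothing to compare against, but your write-up fills the gap correctly.
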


%---------------------------------------

\begin{lemma}\label{lem:1}
Let $(T,f)$ be a tree game. For every $i\in N(T)$, the following holds:
$$
\bigcup\limits_{C\in\Trimmed(T):i\in C} \SameTrim(C,T)\ \ \ =\ \ \ \{C\subseteq N(T) : i\in N(T^C_{\trim}) \}.
$$
\end{lemma}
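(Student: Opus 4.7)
The plan is to prove the set equality by double inclusion, treating the left-hand side and the right-hand side as collections of subsets of $N(T)$. The forward inclusion is essentially by unwinding definitions, while the reverse inclusion requires exhibiting, for each $C'$ with $i \in N(T^{C'}_{\trim})$, a trimmed coalition $C$ containing $i$ such that $C' \in \SameTrim(C, T)$. The natural candidate is $C := N(T^{C'}_{\trim})$.

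First I would prove $\subseteq$. Fix $C \in \Trimmed(T)$ with $i \in C$ and any $C' \in \SameTrim(C, T)$. By the definition of $\SameTrim$ in \eqref{eqn:sameTrim}, $T^{C'}_{\trim} = T^{C}_{\trim}$. Since $C \in \Trimmed(T)$, equation \eqref{eqn:trimmed} gives $T^C_{\trim} = T^C$, so $N(T^C_{\trim}) = C$ and hence $i \in N(T^{C'}_{\trim})$, placing $C'$ in the right-hand side.

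For the reverse inclusion $\supseteq$, take any $C' \subseteq N(T)$ with $i \in N(T^{C'}_{\trim})$, and set $C := N(T^{C'}_{\trim})$. I would establish three claims: (i) $i \in C$, which is immediate; (ii) $C \in \Trimmed(T)$; and (iii) $C' \in \SameTrim(C, T)$. The main work is (ii): take any $j \in C$, so $j \in C'$ and $\anc(j, T) \subseteq C'$ by \eqref{eqn:nodesOfSub_trim}; for each $k \in \anc(j, T)$ one has $\anc(k, T) \subseteq \anc(j, T) \subseteq C'$, so $k \in C$ by \eqref{eqn:nodesOfSub_trim} again. Thus $\anc(j, T) \subseteq C$ for every $j \in C$, which is exactly the condition in \eqref{eqn:nodesOfSub_trim} characterizing $T^C = T^C_{\trim}$, so $C \in \Trimmed(T)$. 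For (iii), since $N(T^C) = C = N(T^{C'}_{\trim})$ and the edge sets of both $T^C$ and $T^{C'}_{\trim}$ are determined by \eqref{eqn:edgesOfSub} applied to their common node set, we obtain $T^C = T^{C'}_{\trim}$, and combined with $T^C = T^C_{\trim}$ from (ii) this gives $T^{C'}_{\trim} = T^C_{\trim}$, so $C' \in \SameTrim(C, T)$.

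The main conceptual hurdle is step (ii): verifying that the node set $N(T^{C'}_{\trim})$ of an already-trimmed subgraph is itself a trimmed coalition. This comes down to an ``ancestors of ancestors are ancestors'' argument, which is routine given the definitions but is the one place where one must argue rather than quote. Everything else follows directly from \eqref{eqn:nodesOfSub_trim}, \eqref{eqn:edgesOfSub_trim}, \eqref{eqn:trimmed}, and \eqref{eqn:sameTrim}.
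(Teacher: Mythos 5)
Your double-inclusion argument is correct: the forward direction follows from $N(T^C_{\trim})=C$ for trimmed $C$, and the reverse direction correctly identifies $C:=N(T^{C'}_{\trim})$ as the trimmed coalition whose $\SameTrim$ class contains $C'$, with the "ancestors of ancestors" step properly justifying $C\in\Trimmed(T)$. The paper states this lemma without proof, and your argument is the natural one it implicitly relies on.
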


%---------------------------------------

\begin{lemma}\label{lem:2}
Let $(T,f)$ be a tree game, and let $C\in\Trimmed(T)$. For every coalition $C'\subseteq N(T)$, the following holds: 
$$
C'\in\SameTrim(C,T)\ \ \ \ \Leftrightarrow \ \ \ \ \big(C\subseteq C'\big)  \wedge  \big(C' \cap \adj(T^C_{\trim},T) = \emptyset\big)
$$
\end{lemma}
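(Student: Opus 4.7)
My plan is to prove the biconditional by handling each direction in turn, using the hypothesis $C \in \Trimmed(T)$ to identify $N(T^C_{\trim})$ with $C$ throughout. Under this identification, $\adj(T^C_{\trim},T)$ equals $\adj(C,T)$, and because $C$ is trimmed (so $\anc(i,T) \subseteq C$ for every $i \in C$), a node $j \notin C$ can be adjacent in $T$ to some $i \in C$ only when $\pa(j,T) \in C$. Thus $\adj(C,T) = \{j \in N(T)\setminus C : \pa(j,T) \in C\}$, a characterisation I will use repeatedly in both directions.

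For the direction $(\Rightarrow)$, I assume $C' \in \SameTrim(C,T)$, i.e.\ $T^{C'}_{\trim} = T^C_{\trim}$, so that $N(T^{C'}_{\trim}) = C$. The inclusion $C \subseteq C'$ is immediate from $N(T^{C'}_{\trim}) \subseteq C'$ via~\eqref{eqn:nodesOfSub_trim}. For the second clause, I argue by contradiction: if some $j \in C' \cap \adj(C,T)$ existed, then $j \in C' \setminus C$ with $\pa(j,T) \in C$, and the trimmed property of $C$ would give $\anc(j,T) = \{\pa(j,T)\} \cup \anc(\pa(j,T),T) \subseteq C \subseteq C'$. Combined with $j \in C'$, this would force $j \in N(T^{C'}_{\trim}) = C$, contradicting $j \notin C$.

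For the direction $(\Leftarrow)$, I assume $C \subseteq C'$ and $C' \cap \adj(C,T) = \emptyset$, and establish $N(T^{C'}_{\trim}) = C$ by mutual inclusion. The inclusion $C \subseteq N(T^{C'}_{\trim})$ follows because every $i \in C$ satisfies $i \in C' \supseteq C$ and $\anc(i,T) \subseteq C \subseteq C'$, matching the criterion in~\eqref{eqn:nodesOfSub_trim}. For the reverse inclusion, I walk the tree: supposing some $j \in N(T^{C'}_{\trim}) \setminus C$ exists, I consider the unique root-to-$j$ path $v_0 = r(T), v_1, \dots, v_k = j$ in $T$. All of $v_0, \dots, v_k$ lie in $C'$ (since $j$ belongs to the trimmed subgraph of $C'$), and in the non-degenerate case $C \neq \emptyset$ we have $v_0 = r(T) \in C$. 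Taking the smallest index $\ell$ with $v_\ell \notin C$, I obtain $v_{\ell-1} = \pa(v_\ell,T) \in C$ and $v_\ell \in C'$, so $v_\ell \in C' \cap \adj(C,T)$, contradicting the disjointness hypothesis.

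The only real subtlety is the edge case $C = \emptyset$, in which $T^C_{\trim}$ is the null graph and the path argument's hook $v_0 \in C$ fails; this case must be handled separately via the observation that $T^{C'}_{\trim}$ is null exactly when $r(T) \notin C'$, with $\adj$ of the null graph interpreted accordingly. Beyond that corner, the proof is a direct unwinding of the definitions in \eqref{eqn:adj}, \eqref{eqn:nodesOfSub_trim}, and \eqref{eqn:sameTrim}, and the tree structure is what makes the root-to-$j$ path argument clean.
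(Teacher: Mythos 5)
Your argument is correct and is the natural unwinding of the definitions; the paper states Lemma~\ref{lem:2} without proof, so there is no official argument to compare against, but both of your directions are sound: the observation that for trimmed $C$ every node of $\adj(T^C_{\trim},T)$ has its parent in $C$, combined with the root-to-$j$ path argument in the $(\Leftarrow)$ direction, is exactly what is needed. Your caveat about $C=\emptyset$ is also well taken: under the paper's literal definitions the biconditional fails there (take $C'=\{r(T)\}$, for which the right-hand side holds vacuously while $T^{C'}_{\trim}$ is not the null graph), though this is harmless in context since the lemma is only invoked in the proof of Theorem~\ref{thm:shapleyOfTreeGames} for coalitions $C\in\Trimmed(T)$ containing a given agent $i$, hence non-empty and containing $r(T)$.
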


%---------------------------------------

Having outlined the necessary lemmas, we now establish a key result with respect to the Shapley value in tree games.

\begin{theorem}\label{thm:shapleyOfTreeGames}
Let $(T,f)$ be a tree game. For every $i\in N(T)$, the following holds:
$$
\Sh_i = \sum\limits_{C \in \Trimmed(T):i\in C} \frac{\left|\adj(T^C,T)\right|!(\left|C\right|-1)!}{\left|C\cup\adj(T^C,T)\right|!} \Big( f(C) - f(C\setminus N(T_i)) \Big).
$$
\end{theorem}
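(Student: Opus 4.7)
The plan is to start from the permutation-based form of the Shapley value,
$\Sh_i = \frac{1}{n!}\sum_{\pi \in \Pi^{N(T)}} \mc(i, C^\pi_i)$,
and reorganise the sum by grouping permutations according to the trimmed subgraph induced by $C^\pi_i \cup \{i\}$. The six lemmas above are tailored exactly for this purpose: Lemmas~\ref{lem:zeroContributionInTreeGames} and~\ref{lem:symmetricContributionInTreeGames} tell us the marginal contribution depends only on that trimmed subgraph (and is zero when $i$ falls off the trimmed component), Lemmas~\ref{lem:SameTrimSetsAreDisjoint}, \ref{lem:UnionOfSameTrimSetsCoversAllCoalitions} and~\ref{lem:1} give us a clean partition of the relevant coalitions, and Lemma~\ref{lem:2} gives a combinatorial description of each block $\SameTrim(C,T)$.

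I would first invoke Lemmas~\ref{lem:zeroContributionInTreeGames} and~\ref{lem:1} to restrict the outer sum to permutations $\pi$ for which $C^\pi_i \cup \{i\}$ lies in $\bigcup_{C\in\Trimmed(T),\, i\in C}\SameTrim(C,T)$; Lemma~\ref{lem:SameTrimSetsAreDisjoint} guarantees this is a disjoint union, so we may split the sum into an outer sum over $C\in\Trimmed(T)$ with $i\in C$ and an inner sum over permutations $\pi$ with $C^\pi_i\cup\{i\}\in\SameTrim(C,T)$. By Lemma~\ref{lem:symmetricContributionInTreeGames}, the inner marginal contribution is a constant depending only on $C$, call it $\Delta(C)$.

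Next I would identify $\Delta(C)$ explicitly. Because $N(T^{C^\pi_i\cup\{i\}}_{\trim}) = C$, we get $v(C^\pi_i\cup\{i\}) = f(C)$. For $v(C^\pi_i)$ I need the short trimming argument that removing $i$ from any set whose trimmed subgraph equals $C$ yields trimmed subgraph $C\setminus N(T_i)$: nodes of $C$ whose ancestors in $T$ include $i$ lose that ancestor and drop out, while nodes of $C$ not descending from $i$ retain all ancestors in $C\setminus\{i\}$; nodes outside $C$ remain disconnected from the root. Thus $\Delta(C)=f(C)-f(C\setminus N(T_i))$. Then I would count the permutations in the inner sum using Lemma~\ref{lem:2}: the condition $C^\pi_i\cup\{i\}\in\SameTrim(C,T)$ is equivalent to the $|C|-1$ nodes of $C\setminus\{i\}$ preceding $i$ in $\pi$ and the $|\adj(T^C,T)|$ nodes of $\adj(T^C,T)$ following $i$, with the remaining $n-|C|-|\adj(T^C,T)|$ nodes free. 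Choosing the positions for the $|C|+|\adj(T^C,T)|$ distinguished nodes and arranging each block gives a count of $\frac{n!\,(|C|-1)!\,|\adj(T^C,T)|!}{(|C|+|\adj(T^C,T)|)!}$; dividing by $n!$ produces precisely the weight in the theorem.

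The main obstacle I expect is not the final combinatorial identity, which is routine, but carefully justifying the two trimming claims: (i) that $C^\pi_i\in\SameTrim(C,T)$ exactly characterises the relative-order constraint used in the count, and (ii) that removing $i$ from any such coalition yields trimmed subgraph $C\setminus N(T_i)$, so that $v(C^\pi_i)=f(C\setminus N(T_i))$. Both reduce to elementary reasoning about which nodes retain a fully-present ancestor chain to the root after $i$ is deleted, using $|C\cup\adj(T^C,T)|=|C|+|\adj(T^C,T)|$ (disjointness) and $\adj(T^C,T)\cap C=\emptyset$ to make the substitution $|C\cup\adj(T^C,T)|!$ in the denominator match the theorem as stated.
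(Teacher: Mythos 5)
Your proposal is correct and follows essentially the same route as the paper's proof: restrict to coalitions containing $i$ in the trimmed component, partition them via $\SameTrim$ using the disjointness and covering lemmas, pull out the constant marginal contribution $f(C)-f(C\setminus N(T_i))$ by the symmetric-contribution lemma and the trimming identity $N(T^{C'\setminus\{i\}}_{\trim})=C\setminus N(T_i)$, and count the admissible permutations via the order constraint from Lemma~\ref{lem:2}. The only cosmetic difference is that the paper counts permutations by a step-by-step slot-filling construction while you count relative orderings within $C\cup\adj(T^C,T)$; both yield the weight $\frac{|\adj(T^C,T)|!\,(|C|-1)!}{|C\cup\adj(T^C,T)|!}$.
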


\begin{proof}
%
%Typically, the Shapley value of $i$ is computed by considering every $C\subseteq N(T):i\notin C$ and computing the marginal contribution of $i$ to $C$. Alternatively, one may consider every $C\subseteq N(T):i\in C$ and compute the marginal contribution of $i$ to $C\setminus\{i\}$. Throughout this proof, we will adopt the latter approach.
%
According to Definition~\ref{def:ShapleyValue}, the Shapley value of $i$ can be computed using Equation~\eqref{eqn:shapleyValue1}. For convenience, we re-write this equation below (recall that $C^{\pi}_i$ is the coalition consisting of the agents that precede $i$ in the permutation $\pi$).
$$
\Sh_i=\frac{1}{n!}\sum_{\pi\in\Pi^A}\mc(i,C^{\pi}_i).
$$
However, for every pair of permutations $\pi_1,\pi_2\in\Pi^A$ such that $C^{\pi_1}_i = C^{\pi_2}_i$, we have: $\mc(i,C^{\pi_1}_i) = \mc(i,C^{\pi_2}_i)$. Based on this, the above equation can be written differently as follows:
$$
\Sh_i = \frac{1}{n!}\sum_{C\subseteq N(T):i\in C} \mc(i,C\setminus\{i\}) \times \left|\{\pi\in\Pi^A:C^{\pi}_i = C\setminus\{i\}\}\right|.
$$
Furthermore, for every $C\subseteq N(T):i\in C$, we know from Lemma~\ref{lem:zeroContributionInTreeGames} that $\mc(i,C\setminus\{i\})=0$ if $i\notin N(T^C_{\trim})$. The Shapley value of $i$ can then be computed as follows:
$$
\Sh_i = \frac{1}{n!}\sum_{C\subseteq N(T):i\in N(T^C_{\trim})} \mc(i,C\setminus\{i\}) \times \left|\{\pi\in\Pi^A:C^{\pi}_i = C\setminus\{i\}\}\right|.
$$
The above equation iterates over every $C\subseteq N(T):i\in N(T^C_{\trim})$. Instead, according to Lemma~\ref{lem:1}, it is possible to iterate over every $\{C\in\Trimmed(T):i\in C\}$, and for every such $C$, iterate over every coalition in $\SameTrim(C,T)$. In so doing, every coalition in $\{C\subseteq N(T):i\in N(T^C_{\trim})\}$ would be visited exactly once, according to Lemma~\ref{lem:SameTrimSetsAreDisjoint}. Based on this, we have:
{\fontsize{9}{9}\selectfont{
\begin{equation}\label{eqn:proof:shapleyOfTreeGames:1}
\Sh_i = \frac{1}{n!}\sum_{C\in\Trimmed(T):i\in C}\ \ \sum_{C'\in\SameTrim(C,T)} \mc(i,C'\setminus\{i\}) \times \Big|\{\pi\in\Pi^A:C^{\pi}_i = C'\setminus\{i\}\}\Big|.
\end{equation}
}}
Now, the definitions of $\Trimmed(T)$ and $\SameTrim(C,T)$, i.e., equations \eqref{eqn:trimmed} and \eqref{eqn:sameTrim}, imply that:
\begin{equation}\label{eqn:proof:shapleyOfTreeGames:2}
\forall C\in\Trimmed(T),\ C\in\SameTrim(C,T).
\end{equation}
Moreover, Lemma~\ref{lem:symmetricContributionInTreeGames} implies that: 
\begin{equation}\label{eqn:proof:shapleyOfTreeGames:3}
\forall C\in\Trimmed(T):i\in C, \forall C'\in\SameTrim(C,T),\ \mc(i,C'\setminus\{i\}) = \mc(i,C\setminus\{a_i\}).
\end{equation}

\noindent Based on equations \eqref{eqn:proof:shapleyOfTreeGames:2} and \eqref{eqn:proof:shapleyOfTreeGames:3}, we can write Equation~\eqref{eqn:proof:shapleyOfTreeGames:1} differently as follows:
$$
\Sh_i = \frac{1}{n!}\sum_{C\in\Trimmed(T):i\in C} \mc(i,C\setminus\{i\}) \sum_{C'\in\SameTrim(C,T)} \Big|\{\pi\in\Pi^A:C^{\pi}_i = C'\setminus\{i\}\}\Big|.
$$
This can be written differently as follows (because $i\in C$ and $i\notin C^{\pi}_i$):
{\fontsize{9}{9}\selectfont{
\begin{equation}\label{eqn:proof:shapleyOfTreeGames:5}
\Sh_i = \frac{1}{n!}\sum_{C\in\Trimmed(T):i\in C} \mc(i,C\setminus\{i\}) \times \Big| \{\pi\in\Pi^A : C^{\pi}_i \cup \{i\} \in \SameTrim(C,T)\} \Big|.
\end{equation}
}}
Next, based on Lemma~\ref{lem:2}, we will compute $\left|\{\pi\in\Pi^{N(T)}:C^{\pi}_i\cup\{i\}\in\SameTrim(C,T)\}\right|$ for a given coalition $C\in\Trimmed(T):i\in C$. To this end, for every $\pi\in\Pi^{N(T)}$, Lemma~\ref{lem:2} implies that $C^{\pi}_i\cup \{i\}\in\SameTrim(C,T)$ if and only if $C^{\pi}_i\cup\{i\}$ contains all the agents in $C$, and none of the agents in $\adj(T^C_{\trim},T)$. It is easy to see how the following steps cover all possible ways to construct such a permutation:
\begin{itemize}
\item\textbf{Step 1:} Place the members of $N(T)\setminus(C\cup\adj(T^C_{\trim},T))$ in any slots in $\pi$, without any restrictions.
\item\textbf{Step 2:} Out of all remaining slots, place the members of $\adj(T^C_{\trim},T)$ in the last slots, without any restriction on the order in which those members are placed in the slots. 
\item\textbf{Step 3:} Out of all remaining slots, place $i$ in the last slot. Steps 2 and 3 ensure that $C^{\pi}_i$ does not contain any of the agents in $\adj(T^C_{\trim},T)$.
\item\textbf{Step 4:} Place the members of $C\setminus\{i\}$ in the remaining slots in any order, without any restrictions. Steps 3 and 4 ensure that $C^{\pi}_i$ contains all the agents in $C$.
\item\textbf{Step 4:} Place the remaining members of $C$ (other than $i$) in the remaining slots in any order, without any restrictions. Steps 3 and 4 ensure that $C^{\pi}_i\cup\{i\}$ contains all the agents in $C$.

\end{itemize}
Let us count the number of possible ways in which each step can be performed:
\begin{itemize}
\item For step 1, the number is: $(n) (n-1) (n-2) \dots (n-\left|N(T)\setminus(C\cup\adj(T^C_{\trim},T))\right|+1)$.
\item For step 2, the number is: $\left|\adj(T^C_{\trim},T)\right|!$.
\item For step 3, the number is: 1.
\item For step 4, the number is: $(|C|-1)!$.
\end{itemize}
By multiplying those numbers, we obtain the size of the set $\{\pi\in\Pi^{N(T)}:C^{\pi}_i\cup\{i\}\in\SameTrim(C,T)\}$, where $C$ is a coalition in $\Trimmed(T)$ that contains $i$. Based on this, Equation~\eqref{eqn:proof:shapleyOfTreeGames:5} can be written differently as follows:
\begin{eqnarray}\label{eqn:proof:shapleyOfTreeGames:14}
     \Sh_i  &=&  \frac{1}{n!}\sum\limits_{{\scriptstyle C\in\Trimmed(T):i\in C}} \hspace*{-0.25cm}{\scriptstyle \mc(i,C\setminus\{i\}) \Big( \big(n\big)\dots\big(n-\left|N(T)\setminus(C\cup\adj(T^C_{\trim},T))\right|+1\big) \times \left|\adj(T^C_{\trim},T)\right|! \times \big(\left|C\right|-1\big)! \Big) } \nonumber\\
     &=& \sum\limits_{{\scriptstyle C\in\Trimmed(T):i\in C}}\ \mc(i,C\setminus\{i\})\ \ \frac{ \left|\adj(T^C_{\trim},T)\right|! \times \big(\left|C\right|-1\big)! }{ \big(n-\left| N(T)\setminus(C\cup\adj(T^C_{\trim},T))\right| \big)! } \nonumber\\
     &=& \sum\limits_{{\scriptstyle C\in\Trimmed(T):i\in C}}\ \mc(i,C\setminus\{i\})\ \ \frac{ \left|\adj(T^C_{\trim},T)\right|! \times \big(\left|C\right|-1\big)! }{ \left|C\cup\adj(T^C_{\trim},T)\right|! } \nonumber\\ 
     &=& \sum\limits_{{\scriptstyle C\in\Trimmed(T):i\in C}}\ \mc(i,C\setminus\{i\})\ \ \frac{ \left|\adj(T^C,T)\right|! \times \big(\left|C\right|-1\big)! }{ \left|C\cup\adj(T^C,T)\right|! } \eqComment{because $C\in\Trimmed(T)$}\nonumber
\end{eqnarray}
Based on this, in order to complete the proof of Theorem~\ref{thm:shapleyOfTreeGames}, it remains to show that: 
\begin{equation}\label{eqn:proof:shapleyOfTreeGames:16}
\mc(i,C\setminus\{i\}) = f\big( C \big) - f\big( C\setminus N(T_i) \big).
\end{equation}
To this end, based on the definition of marginal contribution and the definition of tree games, we know that for every tree game $(T,f)$ and every $C\subseteq N(T):i\in C$, the following holds:
\begin{equation}\label{eqn:proof:shapleyOfTreeGames:17}
\mc(i,C\setminus\{i\}) = f\big( N(T^C_{\trim}) \big) - f\big( N(T^{C\setminus\{i\}}_{\trim}) \big).
\end{equation}
We also know that $C\in\Trimmed(T)$. Thus, based on Equation~\eqref{eqn:trimmed}---the definition of $\Trimmed(T)$---as well as Equation~\eqref{eqn:proof:shapleyOfTreeGames:17}, we have:
$$
\mc(i,C\setminus\{i\}) = f\big(C\big) - f\big( N(T^{C\setminus\{i\}}_{\trim}) \big).
$$
Based on this, to prove the correctness of Equation~\eqref{eqn:proof:shapleyOfTreeGames:16}, it suffices to show that:
\begin{equation}\label{eqn:proof:shapleyOfTreeGames:18}
C\setminus N(T_i) \ =\ N(T^{C\setminus\{i\}}_{\trim}).
\end{equation}
From the definition of $\Trimmed(T)$, for every $C\in\Trimmed(T):i\in C$, we know that
\begin{equation}\label{eqn:proof:shapleyOfTreeGames:19}
C=N(T^C)=N(T^C_{\trim}),
\end{equation}
and it is easy to see that
\begin{equation}\label{eqn:proof:shapleyOfTreeGames:20}
N(T^{C}_{\trim})\setminus N(T_i) = N(T^{C\setminus\{i\}}_{\trim}).
\end{equation}
Equations \eqref{eqn:proof:shapleyOfTreeGames:19} and \eqref{eqn:proof:shapleyOfTreeGames:20} imply the correctness of Equation~\eqref{eqn:proof:shapleyOfTreeGames:18} and thus conclude the proof of Theorem~\ref{thm:shapleyOfTreeGames}.
\end{proof}

%Now that we have defined tree games, and analyzed some of their relevant properties, in each of the following subsections we will zoom in on a particular subclass of tree games.
Now that we have defined tree games, and analyzed some of their relevant properties, in the following subsections we will focus on a subclass of tree games that is potentially relevant in some multi-level-marketing settings.

%%%%%%%%%%%%%%%%%%%%%%%%%%%%%%%%%%%%%%%%%%%%%%%%%%%%%%%%%%%%%%%%%%%%%%%%%%%%%%%

\subsection{Basic Tree Games}\label{sec:basicTreeGames}

%%%%%%%%%%%%%%%%%%%%%%%%%%%%%%%%%%%%%%%%%%%%%%%%%%%%%%%%%%%%%%%%%%%%%%%%%%%%%%%

%As we have seen in the previous subsection, in a size-based tree game, $v(C)$ depends on the number of agents in $C$ that are connected to the root via other members of $C$. Now if $v(C)$ happens to be exactly equal to that number, then it is called a \textit{basic tree game}. More formally:
%
Intuitively, \textit{basic tree games} are those where the value of a coalition, $C$, equals the number of agents in $C$ that are connected to the root via other members of $C$. More formally:

\begin{definition}\label{def:basic}
\textbf{[Basic Tree Game]} A tree game, $(T,f)$, is said to be \emph{basic} if and only if: $f(C)=|C|, \forall C\in\Trimmed(T)$.
\end{definition}

Our interest in this subclass is driven by our focus on multi-level marketing. To see how this is relevant, recall that one of main properties of the Shapley value is \emph{``Additivity''}---for every pair of games, $(N,v)$ and $(N,w)$, and every agent $i\in N$, we have: $\Sh_i(N,v) + \Sh_i(N,w) = \Sh_i(N,v+w)$, where the game $(N,v+w)$ is defined by $(v+w)(C) = v(C)+w(C)$ for every $C\subseteq N$. While this property is admittedly not very intuitive, it implies the linearity of the Shapley value. That is, if we scale a game---i.e., multiply all coalition values by some constant---the Shapley values will simply be multiplied by that same constant. To see how this relates to basic tree games, consider a website like \url{www.888casino.com}, which claims to offer the same amount of reward, $\pounds 88$ to be precise, for every subscriber. This can be interpreted as follows. For every subset of agents $C\subseteq N(T)$, the reward is $\pounds 88$ multiplied by the number of agents in $C$ who reach the root via other members of $C$ (other members of $C$ receive no reward simply because they did not reach the root). In other word, in the tree game $(T,f)$ which represents this website's preferences, we have:
$$
f(C) = \pounds 88 \times|C|,\ \ \forall C\in\Trimmed(T).
$$
Now instead of computing the Shapley values in this game, it is possible---based on the aforementioned linearity property---to first compute the Shapley values in an alternative basic tree game $(N,f')$, where:
$$
f'(C) = \pounds 1 \times |C|,\ \ \forall C\in\Trimmed(T),
$$
and then simply multiply the resulting Shapley values by $88$. This would return the Shapley values of the original game, $(T,f)$. More generally, for every tree game $(T,f)$ where there is a constant reward $x$ for every player that reaches the root, it is possible to first compute the Shapley values in a basic tree game which has the same tree, $T$, and then multiply the resulting Shapley values by $x$.

Now that we have elaborated on the intuition behind tree games, let us analyze some of their properties.

\begin{lemma}\label{lemma:marginalContribution:basicGame}
In a basic tree game, $(T,f)$, for every agent $i\in N(T)$, and every coalition $C\subseteq N(T)\setminus\{i\}$, the following holds:
$$
\mc(i,C)= \Big|N(T^{C\cup\{i\}}_{\trim})\cap N(T_i)\Big|.
$$
\end{lemma}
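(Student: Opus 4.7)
The plan is to unfold the marginal contribution via the definition of basic tree games and then show that the trimmed induced subgraph on $C\cup\{i\}$ decomposes cleanly into the trimmed induced subgraph on $C$ plus exactly the nodes of $T^{C\cup\{i\}}_{\trim}$ lying in $T_i$.

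By Definition~\ref{def:marginalContribution} and Definition~\ref{def:basic},
\[
\mc(i,C) \;=\; v(C\cup\{i\}) - v(C) \;=\; \bigl|N(T^{C\cup\{i\}}_{\trim})\bigr| - \bigl|N(T^C_{\trim})\bigr|.
\]
So the entire task reduces to establishing the set identity
\[
N(T^{C\cup\{i\}}_{\trim}) \setminus N(T_i) \;=\; N(T^C_{\trim}),
\]
together with the disjointness $N(T^C_{\trim}) \cap N(T_i) = \emptyset$, which together yield the desired count.

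The core observation I would rely on is that a node $j \neq i$ has $i$ as an ancestor if and only if $j \in \des(i,T)$, that is, iff $j \in N(T_i)$. So for any $j \notin N(T_i)$ with $j \neq i$, adding $i$ to $C$ cannot alter whether the ancestor chain of $j$ is fully contained in the coalition. Concretely, I would prove the identity by two inclusions using the characterisation \eqref{eqn:nodesOfSub_trim}. For the forward inclusion, suppose $j \in N(T^{C\cup\{i\}}_{\trim}) \setminus N(T_i)$; then $j \in C\cup\{i\}$ and $\anc(j,T) \subseteq C \cup \{i\}$, and since $j \notin N(T_i)$ we have $j \neq i$ and $i \notin \anc(j,T)$, forcing $j \in C$ and $\anc(j,T) \subseteq C$, hence $j \in N(T^C_{\trim})$. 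For the reverse inclusion, if $j \in N(T^C_{\trim})$ then trivially $j \in C \subseteq C\cup\{i\}$ and $\anc(j,T) \subseteq C \subseteq C\cup\{i\}$, so $j \in N(T^{C\cup\{i\}}_{\trim})$; moreover $j$ cannot lie in $N(T_i)$ because if $j = i$ then $i \in C$ (contradicting $i \notin C$), and if $j \in \des(i,T)$ then $i \in \anc(j,T) \subseteq C$ (again contradicting $i \notin C$), giving both the reverse inclusion and the disjointness claim in one go.

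Combining the set identity with the disjointness, the cardinality splits as
\[
\bigl|N(T^{C\cup\{i\}}_{\trim})\bigr| \;=\; \bigl|N(T^C_{\trim})\bigr| + \bigl|N(T^{C\cup\{i\}}_{\trim}) \cap N(T_i)\bigr|,
\]
and substituting back into the expression for $\mc(i,C)$ finishes the proof. I expect no serious obstacle here; the only thing to be careful about is not conflating $j \in N(T_i)$ (meaning $j$ is $i$ or a descendant of $i$) with $i$ being in the ancestor chain of $j$ and correctly handling the boundary case $j = i$ in both inclusions.
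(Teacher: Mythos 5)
Your proof is correct and follows essentially the same route as the paper's: both reduce $\mc(i,C)$ to $\bigl|N(T^{C\cup\{i\}}_{\trim})\bigr|-\bigl|N(T^{C}_{\trim})\bigr|$ and then use the observation that a node $j$ is affected by adding $i$ precisely when $j=i$ or $i\in\anc(j,T)$, i.e.\ when $j\in N(T_i)$. The paper phrases this as identifying $N(T^{C\cup\{i\}}_{\trim})\setminus N(T^{C}_{\trim})$ with $N(T^{C\cup\{i\}}_{\trim})\cap N(T_i)$, while you prove the equivalent identity $N(T^{C\cup\{i\}}_{\trim})\setminus N(T_i)=N(T^{C}_{\trim})$ with somewhat more explicit two-inclusion detail; the difference is purely presentational.
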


\begin{proof}
Following the definitions of marginal contribution and basic tree games:
\begin{equation}\label{eqn:proof:marginalContribution:basicGame:1}
\mc(i,C) = \Big|N(T^{C\cup\{i\}}_{\trim})\Big| - \Big|N(T^C_{\trim})\Big|.
\end{equation}
Furthermore, we know from Equation~\eqref{eqn:nodesOfSub_trim}---the definition of $N(T^C_{\trim})$---that the following holds, simply because $C$ is a subset of $C\cup\{i\}$:
\begin{equation}\label{eqn:proof:marginalContribution:basicGame:2}
N(T^C_{\trim})\subseteq N(T^{C\cup\{i\}}_{\trim}).
\end{equation}
Based on \eqref{eqn:proof:marginalContribution:basicGame:1} and \eqref{eqn:proof:marginalContribution:basicGame:2}, we have: 
\begin{equation}\label{eqn:proof:marginalContribution:basicGame:3}
\mc(i,C) = \Big|N(T^{C\cup\{i\}}_{\trim}) \setminus N(T^C_{\trim})\Big|.
\end{equation}
We also know from Equation~\eqref{eqn:nodesOfSub_trim} that an agent $j\in N(T^{C\cup\{i\}}_{\trim})$ does not belong to $N(T^C_{\trim})$ if and only if $i=j$ or $i\in\anc(j,T)$; in either case we have $j\in N(T_i)$. This, as well as Equation~\eqref{eqn:proof:marginalContribution:basicGame:3}, imply the correctness of the lemma.
\end{proof}

One of the desirable properties of a basic tree game is that it is convex, as stated in the following theorem.

\begin{theorem}\label{thm:basicGamesAreConvex}
A basic tree game is a convex game.
\end{theorem}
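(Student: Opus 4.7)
The plan is to use the standard equivalent characterization of convexity in terms of non-decreasing marginal contributions: a characteristic function game $(A,v)$ is convex (supermodular) if and only if for every agent $i \in A$ and every pair of coalitions $C \subseteq C' \subseteq A \setminus \{i\}$, we have $\mc(i,C) \leq \mc(i,C')$. So the goal reduces to showing that marginal contributions in a basic tree game are monotone in the coalition.

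First I would invoke Lemma~\ref{lemma:marginalContribution:basicGame}, which gives $\mc(i,C) = \left|N(T^{C\cup\{i\}}_{\trim}) \cap N(T_i)\right|$. The claim will follow once I establish a monotonicity property of trimmed subgraphs, namely: if $C \subseteq C' \subseteq N(T)$, then $N(T^C_{\trim}) \subseteq N(T^{C'}_{\trim})$. This is immediate from Equation~\eqref{eqn:nodesOfSub_trim}: any $j$ with $j \in C$ and $\anc(j,T) \subseteq C$ automatically satisfies $j \in C'$ and $\anc(j,T) \subseteq C'$. Intuitively, adding agents to a coalition can only ``unlock'' more descendants by completing ancestor chains, never break existing ones.

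Combining these ingredients, pick any $i \in N(T)$ and any $C \subseteq C' \subseteq N(T) \setminus \{i\}$. Since $C \cup \{i\} \subseteq C' \cup \{i\}$, the monotonicity property gives
$$N(T^{C \cup \{i\}}_{\trim}) \subseteq N(T^{C' \cup \{i\}}_{\trim}).$$
Intersecting both sides with $N(T_i)$ preserves the inclusion, so by Lemma~\ref{lemma:marginalContribution:basicGame},
$$\mc(i,C) = \left|N(T^{C\cup\{i\}}_{\trim}) \cap N(T_i)\right| \leq \left|N(T^{C'\cup\{i\}}_{\trim}) \cap N(T_i)\right| = \mc(i,C').$$
By the equivalent characterization of convexity, this establishes the theorem.

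There is no significant obstacle here; the work has essentially been done by Lemma~\ref{lemma:marginalContribution:basicGame}, and the only step that needs care is the one-line verification that trimmed subgraphs are monotone in the coalition. The only mildly non-trivial ingredient is remembering (or briefly justifying) the standard equivalence between supermodularity of $v$ and monotonicity of marginal contributions.
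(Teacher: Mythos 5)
Your proof is correct and follows essentially the same route as the paper's: both invoke the marginal-contribution characterization of convexity, apply Lemma~\ref{lemma:marginalContribution:basicGame}, and use the monotonicity of $N(T^C_{\trim})$ in $C$ derived from Equation~\eqref{eqn:nodesOfSub_trim}. Your version is if anything slightly more careful, since you explicitly note that intersecting with $N(T_i)$ preserves the inclusion.
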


\begin{proof}
It is known (see, e.g., \cite{elkind:etal:13}) that a characteristic function game is convex if and only if for every pair of coalitions $C$, $C'$ such that $C\subset C'$, and for every agent $i$ not belonging to $C'$, the following holds:
\begin{equation}\label{eqn:proof:basicGamesAreConvex}
\mc(i,C) \leq \mc(i,C').
\end{equation}
\noindent Now since $C\subset C'$, then based on Equation~\eqref{eqn:nodesOfSub_trim}, we have:
$$
N(T^{C\cup\{i\}}_{\trim})\subseteq N(T^{C'\cup\{i\}}_{\trim}).
$$
\noindent This, as well as Lemma~\ref{lemma:marginalContribution:basicGame}, imply that the inequality in \eqref{eqn:proof:basicGamesAreConvex} holds, which implies the correctness of Theorem~\ref{thm:basicGamesAreConvex}.
\end{proof}

Theorem~\ref{thm:basicGamesAreConvex} immediately implies that the Shapley value of a basic tree game is in the core of that game. This makes the Shapley value-based division scheme even more attractive, since it is not only fair but also stable (see Section~\ref{sec:background} for more details).

\begin{theorem}\label{thm:shapleyOfBasicGames}
Let $(T,f)$ be a basic tree game. For every agent $i\in N(T)$, we have:
\begin{equation}\label{eqn:shapley:basicGame}
\Sh_i = \sum_{j=0}^{\height(T_i)} \frac{\left|\Level_j(T_i)\right|}{\depth(i,T)+j+1}
\end{equation}
\end{theorem}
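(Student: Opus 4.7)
The plan is to exploit additivity of the Shapley value and decompose the basic tree game into a sum of per-node indicator games whose Shapley values are transparent. Since $v(C) = f(N(T^C_{\trim})) = |N(T^C_{\trim})|$, we can write $v(C) = \sum_{k \in N(T)} v_k(C)$, where $v_k(C) = 1$ if $k \in N(T^C_{\trim})$ and $0$ otherwise. By Equation~\eqref{eqn:nodesOfSub_trim}, $k \in N(T^C_{\trim})$ holds iff $C$ contains $k$ together with every ancestor of $k$ in $T$, so $v_k$ is simply the indicator that $S_k \subseteq C$, where $S_k := \{k\} \cup \anc(k,T)$ and $|S_k| = \depth(k,T) + 1$.

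Next, I will compute $\Sh_i(v_k)$ directly from Definition~\ref{def:ShapleyValue}. The marginal contribution $\mc(i, C) = v_k(C \cup \{i\}) - v_k(C)$ in the game $v_k$ equals $1$ precisely when $i \in S_k$ and $S_k \setminus \{i\} \subseteq C$, and is $0$ otherwise. In particular $\Sh_i(v_k) = 0$ whenever $i \notin S_k$. For $i \in S_k$, summing over permutations of $N(T)$ gives
$$\Sh_i(v_k) \;=\; \frac{1}{n!} \cdot \big|\{\pi \in \Pi^{N(T)} : S_k \setminus \{i\} \subseteq C^\pi_i\}\big|,$$
and the latter counts exactly those permutations in which $i$ appears after every other element of $S_k$. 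By symmetry over the $|S_k|!$ relative orderings of $S_k$ within a uniform random permutation, the proportion of such permutations is $1/|S_k|$, yielding $\Sh_i(v_k) = 1/(\depth(k,T) + 1)$.

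Applying additivity of the Shapley value, $\Sh_i = \sum_{k \in N(T)} \Sh_i(v_k)$. Since $i \in S_k$ iff $k = i$ or $i \in \anc(k,T)$, equivalently iff $k \in N(T_i)$, only nodes in the subtree rooted at $i$ contribute, giving
$$\Sh_i \;=\; \sum_{k \in N(T_i)} \frac{1}{\depth(k,T) + 1}.$$
Finally, since $\depth(k,T) = \depth(i,T) + \depth(k, T_i)$ for every $k \in N(T_i)$, partitioning $N(T_i)$ into its level sets $\Level_0(T_i), \ldots, \Level_{\height(T_i)}(T_i)$ collapses the sum into the claimed Equation~\eqref{eqn:shapley:basicGame}.

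The main step to get right is the symmetry argument producing $\Sh_i(v_k) = 1/|S_k|$; everything else is essentially definitional bookkeeping. Rather than invoking the classical Shapley value of a unanimity game (a concept not formally introduced in this paper), I handle this step by directly counting the permutations in which $i$ is the last element of $S_k$ to appear, which makes the derivation self-contained using only Definition~\ref{def:ShapleyValue} and the structural observation about $N(T^C_{\trim})$ from Equation~\eqref{eqn:nodesOfSub_trim}.
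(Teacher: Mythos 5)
Your proof is correct, and it lands on the same core computation as the paper: each node $k$ in the subtree $T_i$ contributes exactly $1/(\depth(k,T)+1)$ to $\Sh_i$, namely the probability that $i$ is the last member of the root-to-$k$ path to appear in a uniformly random permutation. The difference is where the decomposition happens. The paper works at the level of permutations: it first establishes Lemma~\ref{lemma:marginalContribution:basicGame}, expressing $\mc(i,C)$ in a basic game as $\bigl|N(T^{C\cup\{i\}}_{\trim})\cap N(T_i)\bigr|$, substitutes this into Equation~\eqref{eqn:shapleyValue1}, and then swaps the order of summation to count, for each $j\in N(T_i)$, the permutations in which the relevant ancestors precede $i$. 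You instead decompose the game itself, writing $v=\sum_k v_k$ with $v_k$ the unanimity (indicator) game on $S_k=\{k\}\cup\anc(k,T)$, and invoke additivity of the Shapley value. These are dual views of the same double sum, but your packaging buys a few things: it bypasses Lemma~\ref{lemma:marginalContribution:basicGame} entirely; your symmetry argument for the factor $1/|S_k|$ is cleaner and more obviously correct than the paper's slot-filling count (which is slightly garbled about whether $i$ is counted inside $\anc(j,T)$, though its final expression is right); and it exposes the structural fact that a basic tree game is a sum of unanimity games over root-to-node paths, which incidentally yields the convexity of Theorem~\ref{thm:basicGamesAreConvex} for free. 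The one thing to make explicit in a final write-up is the justification of additivity, which the paper only invokes informally in its discussion of scaling: it follows immediately from the fact that Equation~\eqref{eqn:shapleyValue1} is linear in $v$, so a one-line remark suffices.
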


\begin{proof}
Since $(T,f)$ is a basic tree game, Equation~\eqref{eqn:shapleyValue1} and Lemma~\ref{lemma:marginalContribution:basicGame} imply that:
$$
\Sh_i = \frac{1}{n!}\sum_{\pi\in\Pi^{N(T)}} \Big|N(T^{C^{\pi}_i\cup\{i\}}_{\trim})\cap N(T_i)\Big|.
$$
\noindent This can be written differently as follows:
$$
\Sh_i = \frac{1}{n!} \sum\limits_{j\in N(T_i)} \Big|\{\pi\in\Pi^{N(T)} : j\in N(T^{C^{\pi}_i\cup\{i\}}_{\trim})\}\Big|.
$$
Based on this, as well as \eqref{eqn:nodesOfSub_trim}---the equation that defines $N(T^C_{\trim})$---we have:
\begin{equation}\label{eqn:proof:shapleyForBasicGames:2}
\Sh_i = \sum\limits_{j\in N(T_i)} \frac{\left|\Pi^{ij}\right|}{n!} \ ,
\end{equation}
\noindent where
\begin{equation}\label{eqn:proof:shapleyForBasicGames:3}
\Pi^{ij} = \left\{ \pi\in\Pi^{N(T)} : \anc(j,T) \subseteq C^{\pi}_i\cup\{i\} \right\}.
\end{equation}
%
%\noindent Observe that, if $j = r(T)$, then $\anc(j,T)=\emptyset$ and so $\Pi^{ij}=\Pi^{N(T)}$ according to \eqref{eqn:proof:shapleyForBasicGames:3}.
%
Let us compute $\left|\Pi^{ij}\right|/n!$. Based on Equation~\eqref{eqn:proof:shapleyForBasicGames:3}, for every $\pi \in \Pi^{N(T)}$, we have:
$$
\pi \in \Pi^{ij}\ \ \Leftrightarrow\ \ \anc(j,T)\setminus\{i\} \subseteq C^{\pi}_i.
$$
\noindent Thus, the number of permutations in $\Pi^{ij}$ equals the number of permutations in which all the agents in $\anc(j,T)\setminus\{i\}$ appear before $i$ in $\pi$. It is easy to see how the following steps cover all possible ways to construct such a permutation:
\begin{itemize}
\item\textbf{Step 1:} Place the members of $N(T)\setminus(\anc(j,T)\cup\{i\})$ in any slots in $\pi$, without any restrictions.
\item\textbf{Step 2:} Out of all remaining slots, place $i$ in the last one. 
\item\textbf{Step 3:} Place the members of $\anc(j,T)$ in the remaining slots (from the previous step, all these slots are before $i$) in any order, without any restrictions. 
\end{itemize}
Based on this, we have:
$$
\left|\Pi^{ij}\right| = \big((n)(n-1)\dots(n-|N(T)\setminus(\anc(j,T)\cup\{i\})|+1)\big) \cdot \big(\left|\anc(j,T)\right|!\big).
$$
\noindent This, as well as the fact that $|\anc(j,T)|=\depth(j,T)$, imply that:

$$
\left|\Pi^{ij}\right| = \big(n\big)\dots\big(\depth(j,T)+2\big)\big(\depth(j,T)\big)\dots\big(1\big).
$$

\noindent Based on this, we have:
$$
\frac{\left|\Pi^{ij}\right|}{n!} = \frac{(n)\dots(\depth(j,T)+2)(\depth(j,T))\dots(1)}{(n)\dots(1)}
$$
\noindent This can be simplified as follows:
\begin{equation}\label{eqn:proof:shapleyForBasicGames:8}
\frac{\left|\Pi^{ij}\right|}{n!} = \frac{1}{depth(j,T)+1}
\end{equation}
Finally, since $j\in N(T_i)$, we have:
\begin{equation}\label{eqn:proof:shapleyForBasicGames:9}
\depth(j,T)=\depth(i,T)+\depth(j,T_i).
\end{equation}
From \eqref{eqn:proof:shapleyForBasicGames:2}, \eqref{eqn:proof:shapleyForBasicGames:8} and \eqref{eqn:proof:shapleyForBasicGames:9}, we find that:
$$
\begin{array}{lll}
\Sh_i & = & \sum\limits_{j\in N(T_i)} \frac{1}{depth(j,T)+1}\smallskip\\
       & = & \sum\limits_{j\in N(T_i)} \frac{1}{\depth(i,T)+\depth(j,T_i)+1}\smallskip\\
       & = & \sum\limits_{j=0}^{\height(T_i)}\sum\limits_{j\in Level_j(T_i)} \frac{1}{\depth(i,T)+j+1}\smallskip\\
       & = & \sum\limits_{j=0}^{\height(T_i)} \frac{\left|Level_j(T_i)\right|}{\depth(i,T)+j+1}
\end{array}
$$
\end{proof}

Based on Theorem~\ref{thm:shapleyOfBasicGames}, as well as the fact that $\height(T_i)\leq n$ for all $i\in N(T)$, the Shapley value of any agent in a basic tree game is computable in $O(n)$ time.

\subsection{Computational Complexity of Shapley Value}
\noindent We have shown that the Shapley value for basic tree games is
linear-time computable. In this section, we compare the computational
complexity of the Shapley in characteristic function games (CFG), tree
games and basic tree games. We consider three standard tree
structures: chain, star, and compete binary
tree. Table~\ref{tbl:complexity} shows the number of coalitions that
need to be considered when computing the Shapley value of agent $i$ at
depth $d$ of a tree with height $h=\height(T)$ when the total number
of agents is $n$. The computation for the case of a complete binary
tree in tree games appears below:
\begin{align*}
& b(h,d)  = (y_{h-d}+1)^2\times \prod_{j=h-d+1}^h(y_j+1), \label{eq:b}
\end{align*}
where $y_1 = 1$ and  $y_j = (y_{j-1}+1)^2$ for $j>1$.

\begin{table}
    \centering
    {
        \tbl{Complexity of the Shapley Value.\label{tbl:complexity}}
        {
            \begin{tabular}{|l||c|c|c|}
                \hline
                & CFG& Tree Game & Basic Tree Game\\
                \hline
                Chain & $2^{n-1}$  &$h-d+1$ & $h-d+1$ \\
                Star &$2^{n-1}$   & $2^{n-2}$& $2-d$\\
                Binary Tree & $2^{n-1}$ & $b(h,d)$ & $h-d+1$ \\
                %Arbitrary tree & $2^{n-1}$   &  $z(T,i)$ & $\height(T_i)+1$  \\
                Arbitrary tree & $2^{n-1}$   &  $\left| \{C \in \Trimmed(T) : i\in C\} \right|$ & $\height(T_i)+1$  \\
                \hline
            \end{tabular}
        }
    }
\end{table}

\section{Discussion and Conclusions}\label{sec:conclusions}

%%%%%%%%%%%%%%%%%%%%%%%%%%%%%%%%%%%%%%%%%%%%%%%%%%%%%%%%%%%%%%%%%%%%%%%%%%%%%%%

% interpret basic games (hint the result can be extended?)
% interpret general games
% connect to multi-level marketing: payoff division. basic games - model any scenario from red ballons, to valuing particular users more.
%talk about assumption: single invitation

\noindent
%Tree games provide a way to compensate users for referral activities in a manner that is fair according to game theory. This provides theoretical foundations for studying fairness in referral domains. Specifically, we show how Shapley value can be used to measure the referral contribution. 
%Unlike geometric mechanisms~\cite{emek11}, the Shapley value provides non-zero compensation to the user at the time of joining. Furthermore, for convex tree games such as basic tree games, the Shapley value of an agent can only increase as the user makes a referral.  This provides an incentive for the user to keep making referrals, a property known as a {\em strong solicitation incentive}~\cite{douceur07}. %Another appealing property of basic tree games is that the total amount distributed in compensation for referral incentives depends only on the final number of users and not on the structure of the referral tree. This property is a consequence of the efficiency axiom of the Shapley value, convexity of basic tree games, and the fact that the value of the grand coalition in basic tree games depends only on the size of the grand coalition.
We now revisit Example~\ref{ex:one}. The fact that Dropbox values a coalition of extra users at 1GB per user, can be encoded as $f(C) = |C|$ satisfying the definition of basic tree games (see Definition~\ref{def:basic}). Referring to Table~\ref{tbl:example}, the Dropbox mechanism give the highest reward to the user who made the most direct referrals: $3$ gets three times as much as any of the other user. Users $6$ and $7$ who did not invite anyone receive 500MB each. In contrast, the geometric mechanism rewards both direct an indirect referrals but does not give anything to non-referring nodes. The Shapley value provides a middle ground: it acknowledges indirect referrals and also compensates non-referring nodes. The Shapley rewards are fair in the sense that  extra value is assigned to $1$ relative to $3$ because without the referral of $1$, agent $3$ would have never joined. The values for the Shapley rewards\footnote{For the Dropbox example, where joining independently does not earn the 1GB referral bonus, we need to modify \eqref{eqn:shapley:basicGame} by subtracting 1 (or 1GB) from the value of the root. So the values of non-root nodes are given by \eqref{eqn:shapley:basicGame} and the value of the root node is given by \eqref{eqn:shapley:basicGame} minus 1.} appearing in Table~\ref{tbl:example} can be computed using \eqref{eqn:shapley:basicGame}. The rewards can also be computed dynamically as each new user joins. Should node $7$ bring a friend, the compensation of each node would change by adding 250MB to compensations of the new node and of nodes $1$, $3$,  $7$.

We derived a mechanism based on a fairness concept. However, it also provides participation incentives for the agents. The property of the Shapley value that in compensates the ``right" types of contribution (compensation is based on the entire subtree and not just on direct referrals) makes it more appealing for successful incentivisation from the intuitive standpoint. Observe that without introducing some costs into the model, any mechanism that offers a non-zero reward to the referrer and the invitee provides rational agents with incentives to participate and to refer all of their friends. To differentiate among Shapley and other mechanisms a future study may consider models where costs (e.g., of making a referral) are present.

%As the example above illustrated, referral incentives do not have to be monetary. More generally, an incentive mechanism is a way recognising contribution, and the appropriate form of recognition may vary. For the purposes of implementation, an incentive mechanism can keep track of virtual points assigned for each useful contribution. The meaning assigned to points may vary from a monetary and non-monetary prizes to more completely different motivators like a desire to compete, or to achieve social recognition.

General tree games allow applying the Shapley value to scenarios where the firm's value for a new user depends on the identity of the user and the identities of other users who join (this allows expressing  combinatorial preferences). This is relevant, for example, when the firm has a particularly high value for certain celebrities who shape the image of the firm. This also allows modelling domains like the Red Balloon Challenge and QINs, where the value is obtained only if a node that holds the answer joins.

%This topic is timely as crowdsourcing becomes a mechanism not only for performing work, but also for finding people to perform it. Numerous crowdsourcing projects have little budget but require a large number of participants. Without recruitment through existing members they have little chance of success~\cite{douceur07}. %Charitable fundraising is another domains where referrals are particularly important for marketing: spending money of traditional fundraising is frowned upon as due to the overhead that is not going towards the cause.
The Shapley value has a  strong axiomatic justification, but its use in practice is limited. The referral marketing domain may be a compelling setting for applying the Shapley value. Empirical evaluation of its merits is an interesting direction for future investigation. 

Implicit in our results is the assumption that each user has only one chance to receive a referral, or in other words, that a node has a single parent through which he can be referred. Relaxation of this assumption is a possible direction for further theoretical work.

%%%%%%%%%%%%%%%%%%%%%%%%%%%%%%%%%%%%%%%%%%%%%%%%%%%%%%%%%%%%%%%%%%%%%%%%%%%%%%%

\bibliographystyle{acmsmall}
\bibliography{bibliography_Shapley}

%%%%%%%%%%%%%%%%%%%%%%%%%%%%%%%%%%%%%%%%%%%%%%%%%%%%%%%%%%%%%%%%%%%%%%%%%%%%%%%

\newpage
\renewcommand{\thesection}{Appendix \arabic{section}}
\appendix
\section{Summary of Notation}\label{sec:tableOfNotation}
%\ \\
\begin{longtable}{@{\hspace{0cm}}>{\centering\arraybackslash}m{2.5cm}@{\hspace{0.5cm}}>{\raggedright\arraybackslash}m{10.8cm}}
$A$ & the set of agents in the game.\smallskip\\
$n$ & the number of agents in the game.\smallskip\\
$C$ & a coalition, i.e., a subset of $A$.\smallskip\\ 
$v(C)$ & the value of coalition $C$.\smallskip\\
$\CS$ & a coalition structure, i.e., a partition of $A$.\smallskip\\
$\mathcal{CS}^A$ & the set of coalition structures over $A$.\smallskip\\
$\pi$ & a permutation of $A$.\smallskip\\
$\Pi^A$ & the set of all possible permutations of $A$.\smallskip\\
$C^{\pi}_i$ & the coalition consisting of the agents that precede $i$ in $\pi$.\smallskip\\
$\mathbf{x}$ & a payoff vector, where $x_i\in\mathbf{x}$ is the payoff of agent $i\in A$.\smallskip\\
$\mc(i,C)$ & the marginal contribution of $i$ to $C$, i.e., $v(C\cup\{i\})-v(C)$.\smallskip\\
$\Sh_i$ & the Shapley value of $i$.\smallskip\\
$G$ & a graph.\smallskip\\
$G^C$ & the subgraph of $G$ that is induced by $C$.\smallskip\\
$N(G)$ & the set of nodes in $G$.\smallskip\\
$E(G) $ & the set of edges in $G$.\smallskip\\
$\adj(G',G)$ & the set of nodes in $G$ that are adjacent to $G'$, where $G'$ is an induced subgraph of $G$.\smallskip\\
$T$ & a tree.\smallskip\\
$T_i$ & the subset of $T$ rooted at $i$.\smallskip\\
$r(T)$ & the root of $T$.\smallskip\\
$\ch(i,T)$ & the set of children of $i$ in $T$.\smallskip\\
$\des(i,T)$ & the set of descendants of $i$ in $T$.\smallskip\\
$\anc(i,T)$ & the set of ancestors of $i$ in $T$.\smallskip\\
$\depth(i,T)$ & the depth of $i$ in $T$.\smallskip\\
$\Level_j(T)$ & the set of nodes in $N(T)$ whose depth is $j$.\smallskip\\
$\height(T)$ & the height of $T$.\smallskip\\
$T^C$ & the subgraph of $T$ that is induced by $C$.\smallskip\\
$T^C_{\trim}$ & the subgraph obtained by ``trimming'' $T^C$. In other words, it is the connected component in $T^C$ that contains $r(T)$. See Figure~\ref{fig:graphAndInducedSubgraph}\smallskip\\
$\Trimmed(T)$ & the set of coalitions that induce ``trimmed'' subgraphs. In other words, it contains every $C\subseteq N(T)$ such that $T^C = T^C_{\trim}$.\smallskip\\
$\SameTrim(C,T)$ & the set of coalitions $C'\subseteq N(C)$ such that by trimming $T^{C'}$ we obtain the same subgraph as the one obtained by trimming $T^C$.\smallskip
%
%\hline
\end{longtable}
%}}

%%%%%%%%%%%%%%%%%%%%%%%%%%%%%%%%%%%%%%%%%%%%%%%%%%%%%%%%%%%%%%%%%%%%%%%%%%%%%%%

\end{document}